\newtheorem{lemma}{Lemma}
\theoremstyle{definition}
\newtheorem{assumption}{Assumption}
\theoremstyle{definition}
\theoremstyle{definition}
\newtheorem{definition}{Definition}[section]
\newtheorem*{remark}{Remark}
\theoremstyle{definition}
\newtheorem{prop}{Proposition}
\newtheorem{claim}{Claim}
\newcommand{\bitemsize}{\begin{itemize}}
\newcommand{\eitemsize}{\end{itemize}}
\newcommand{\be}{\begin{equation*}\begin{aligned}}  
\newcommand{\ee}{\end{aligned}\end{equation*}}
\newcommand{\benum}{\begin{enumerate}}
\newcommand{\eenum}{\end{enumerate}}
\newcommand{\bmat}{\begin{bmatrix}}
\newcommand{\emat}{\end{bmatrix}}
\newcommand{\E}{\mathrm{E}}
\title{Polarization by Design: How Elites Could Shape Mass Preferences as AI Reduces Persuasion Costs
}
\author{Nadav Kunievsky\footnote{Knowledge Lab, University of Chicago. 
I thank James A. Evans and Natalie Goldshtein for helpful discussions. 
This paper has greatly benefited from a review generated by \href{https://www.refine.ink/}{Refine.ink}. All remaining errors are my own.}}\date{\today}
\begin{document}

\maketitle
\begin{abstract}
In democracies, major policy decisions typically require some form of majority or consensus, so elites must secure mass support to govern. Historically, elites could shape support only through limited instruments like schooling and mass media; advances in AI-driven persuasion sharply reduce the cost and increase the precision of shaping public opinion, making the distribution of preferences itself an object of deliberate design. We develop a dynamic model in which elites choose how much to reshape the distribution of policy preferences, subject to persuasion costs and a majority rule constraint. With a single elite, any optimal intervention tends to push society toward more polarized opinion profiles—a ``polarization pull''—and improvements in persuasion technology accelerate this drift. When two opposed elites alternate in power, the same technology also creates incentives to park society in ``semi-lock'' regions where opinions are more cohesive and harder for a rival to overturn, so advances in persuasion can either heighten or dampen polarization depending on the environment. Taken together, cheaper persuasion technologies recast polarization as a strategic instrument of governance rather than a purely emergent social byproduct, with important implications for democratic stability as AI capabilities advance.
\end{abstract}

\section{Introduction}\label{sec:intro}

Democratic policymaking is mediated by public support. Constitutions, electoral rules, and informal norms all embed some version of a majority or consensus constraint: policy can be implemented only if enough citizens, or their representatives, are willing to endorse it. For elites who care about which policies are chosen, this makes the distribution of public preferences a central strategic object. To govern, they must either accept the beliefs they inherit or invest resources in reshaping them.

Historically, the tools available to shape mass preferences have been blunt, dull, and slow. States and parties have relied on school curricula, public broadcasting, subsidized media, patronage networks, and propaganda campaigns to tilt opinion in their favor.\footnote{See, among many others, \citet{DellaVigna2007FoxNews}, \citet{Adena2015Radio}, \citet{Cantoni2017Curriculum}, and \citet{dellavigna2010persuasion}.} These instruments operate at coarse levels of targeting, require long lead times, and are costly to adapt. As a result, even powerful governments have typically faced high and relatively rigid costs of persuasion.

The rapid diffusion of modern AI is changing this constraint. Generative models, agentic systems, and platform-level tools make it possible to generate, test, and personalize persuasive content at scale, in real time, and at very low marginal cost.\footnote{For recent evidence on LLM-based persuasion, see \citet{Salvi2025GPT4Persuasiveness}, \citet{Schoenegger2025LLMPersuasion}, \citet{tappin2023quantifying}, \citet{argyle2025testing}, \citet{bai2025llm}, \citet{simchon2024persuasive}, and \citet{hackenburg2024evaluating}.} A “nation of geniuses on the cloud’’ (\citet{korinek2024economics}) can now draft tailored messages, probe millions of individuals, and adapt narratives as feedback arrives. As these technologies lower and reshape persuasion costs, public preferences cease to be a fixed constraint and become something closer to a choice variable for elites. This raises a natural question: if elites can cheaply engineer the distribution of opinions, \emph{what distributions will they find optimal?}

In this paper, we develop a dynamic model in which, in each period, an elite faces a binary policy choice, must secure a majority to enact its preferred policy, and can, at some cost, shift public support. The model’s key mechanism is the uncertainty elites face about future ideal policies, which may change over time. Because public preferences are sticky and costly to alter, shaping opinion today influences an elite’s ability to enact its preferred policy in the future.

Our analysis focuses on the degree of polarization that would arise if elites could shape public preferences. We define polarization as the distance from unanimous agreement: in a binary setting, society is maximally polarized when support is evenly split and minimally polarized when near consensus. The central object of choice is not the policy itself but the distribution of mass preferences that governs future policy contests.

Our central departure from the existing polarization literature is conceptual. A large body of work studies polarization as the outcome of deeper forces such as changes in income distributions, identity cleavages, media markets, party strategies, or social networks—and then analyzes its consequences for turnout, policy, and welfare.\footnote{See, for example, \citet{DiMaggioEvansBryson1996}, \citet{EstebanRay1994}, \citet{DuclosEstebanRay2004},
\cite{McCartyPooleRosenthal2016}, \cite{IyengarWestwood2015}, \cite{Mason2018}, \cite{Shayo2009}, \cite{DellaVigna2007FoxNews}, \cite{MartinYurukoglu2017}, \cite{Lee2016}, \cite{Binder1999}, \cite{Azzimonti2011}, 
and the literature on affective polarization surveyed by \citet{iyengar2019origins}.} We instead treat polarization as the result of a \emph{choice}: a policy instrument in the hands of elites who face majority or consensus constraints and have the technology to reshape public opinion. In our framework, elites endogenously decide whether to maintain a cohesive society or to manufacture a more divided one, given their expectations about future states of the world and the technology of persuasion.

We begin by analyzing the case of a single, forward-looking elite. In each period, the state of the world determines which of two policies is ex post desirable. The elite observes this state and can adjust public support before the majority rule determines policy. With costly persuasion technology, we show a robust polarization pull: whenever the elite chooses to influence public opinion, society moves weakly toward maximal polarization. Intuitively, a distribution of opinions clustered near the majority threshold provides insurance against future uncertainty, allowing the elite to respond quickly and cheaply to changing priorities. If the state flips in the next period, the cost of shifting public support across the threshold is minimal. Polarization is therefore not intrinsically valuable to the elite but strategically attractive because a divided society is easier and cheaper to steer in the face of shocks. As persuasion costs decline—reflecting, for example, increasingly capable AI—the model predicts faster convergence toward highly polarized opinion profiles.

We then introduce a second elite with diametrically opposed preferences and alternating control. When today’s leader anticipates that tomorrow’s rival will also be able to reshape public opinion, the problem becomes more subtle. We show that competition introduces a new force that can work \emph{against} polarization. On the one hand, both elites still value a society near the majority threshold because it is cheap to move. On the other hand, a highly polarized society is easy for the rival to recapture when power switches. This creates incentives to push opinions into “semi-lock’’ regions that are costly for the opponent to overturn, and can generate a pull toward more cohesive, less polarized society. Whether  persuasion technology amplifies or dampens polarization in the two-elite case depends on the relative pull of the semi-lock regions and polarizations, compare to the status quo.

Our analysis relates to three strands of literature. First, we contribute to the extensive body of work on polarization and its political consequences (e.g., \cite{DiMaggioEvansBryson1996}; \citealp{EstebanRay1994}; \citealp{DuclosEstebanRay2004}) by modeling polarization as the optimal choice of a forward-looking decision-maker facing a consensus constraint, rather than as a reduced-form outcome of other processes. Second, we connect to the literature on the economics of persuasion and media, where actors invest in information or framing to influence beliefs and policy outcomes \citep[e.g.,][]{DellaVigna2007FoxNews,Adena2015Radio,Cantoni2017Curriculum,dellavigna2010persuasion}. We differ by focusing not on whether these persuasive efforts succeed, but on what the long-term objectives of such actors might be, and how advances in persuasion technology shape their behavior. Third, we contribute to the emerging literature on transformative AI (see the review by \citealp{korinek2024economics}), which examines how advanced AI may reallocate power, reshape information flows, and alter political equilibria. Our model offers a tractable mechanism linking improvements in persuasion technology, driven by AI (\cite{Salvi2025GPT4Persuasiveness}, \cite{Schoenegger2025LLMPersuasion},
\cite{tappin2023quantifying},
\cite{argyle2025testing},
\cite{bai2025llm},
\cite{simchon2024persuasive},
\cite{hackenburg2024evaluating}), to the long-run distribution of opinions and the dynamics of polarization.

The remainder of the paper is organized as follows. Section~\ref{sec:singleElite} introduces the single-elite framework, characterizes optimal policies in a two-period benchmark, and derives the polarization pull and its comparative statics in the infinite-horizon setting. Section~\ref{sec:twoElites} extends the model to two competing elites, analyzes a two-period Stackelberg game, and studies Markov-perfect equilibria numerically, highlighting the interaction between polarization and lock-in incentives. Section~\ref{sec:conclusions} concludes.

\section{Single Elite}\label{sec:singleElite}

Time is discrete, \( t \in \{1, \ldots, T\} \). In each period, the ruling elite can influence the distribution of public opinion. The elite’s objective depends on the current state of the world, denoted \( s_t \in \{0,1\} \). The state is drawn independently across periods, with \(\Pr(s_t = 1) = \pi \in (0,1)\). After observing the state of the world, the elite selects its preferred policy. We assume that the elite’s ideal policy matches the realized state.

To implement a policy \( y_t \in \{0,1\} \), the elite requires sufficient public support. Let \( p_t \) denote the share of the public that supports policy \(1\) in period \(t\). Define the implemented policy as
\[
y^s_t(p_t) = \begin{cases}
\mathbbm{1} & \text{ if } p_t > \tfrac{1}{2} \\
s & \text{ if } p_t = \tfrac{1}{2} \\
0 & \text{Otherwise}
\end{cases}.
\]
where we assume that if $p_t = \frac{1}{2}$, the elite can choose which policy to implement. This reflects the idea that they only need to sway a minuscule share of the public to enact their preferred policy. Throughout, we slightly abuse notation by writing $y_t(p_t)$ instead of $y_t^{s}(p_t)$, understanding that the ruling elite can implement its preferred policy when $p_t = \tfrac{1}{2}$.
 
The elite’s payoff from policy \( y_t \), given state \( s_t \), is
\[
u(y_t, s_t) = H \cdot \mathbbm{1}\{y_t(p_t) = s_t\},
\]
where \( \mathbbm{1}{\cdot} \) is the indicator function and \( H \) is a scalar representing the benefit to the elite from implementing its preferred policy. The elite has also a technological capacity to shape public opinion. Let \( c(p' - p) \) denote the cost of changing the share of the public that supports policy \(1\) from \( p \) to \( p' \). We make the following assumption on the cost function. 

\begin{assumption}\label{ass:cost_assumption}
The cost function \( c(x) \) is strictly increasing in \( |x| \), strictly convex, satisfies \( c(0) = 0 \), and is symmetric such that for any \( p, p' \in [0,1] \), $c(p' - p) = c(p - p')$.
\end{assumption}

\begin{remark}
Our model is agnostic about the exact channels through which governments shape public beliefs. In practice, elites influence opinion via (i) information provision and control (e.g., \citet{kamenica2011bayesian,PratStromberg2013,EnikolopovPetrovaZhuravskaya2011,GentzkowShapiroSinkinson2011}), (ii) incentives that reward compliance or mobilize turnout (e.g., \citet{Stokes2005,DeLaO2013,ManacordaMiguelVigorito2011}), and (iii) direct preference formation through schooling and value-oriented interventions (e.g.,\citet{AlesinaFuchsSchundeln2007,Cantoni2017Curriculum}). As technology advances, this arsenal extends beyond education, subsidies to pivotal groups, coercion, and mass media (e.g., \cite{Adena2015Radio,DellaVigna2007FoxNews,dellavigna2010persuasion}) to include algorithmically amplified, adaptive messaging—“transformative AI”—that personalizes, scales, and optimizes persuasion (e.g \cite{Salvi2025GPT4Persuasiveness}, \cite{Schoenegger2025LLMPersuasion},
\cite{tappin2023quantifying},
\cite{argyle2025testing},
\cite{bai2025llm},
\cite{simchon2024persuasive},
\cite{hackenburg2024evaluating}). These AI-driven tools enable elites to spread targeted information or misinformation more efficiently, leveraging the structure of influence itself rather than direct persuasion. The difficulty of shifting the beliefs of a large population is captured by our assumption of a convex cost function, implying that changing the minds of a small group is substantially easier than transforming the views of society at large.
\end{remark}

\begin{remark}
Our model leaves the elite’s objective unspecified. It can accommodate a wide spectrum of motivations—from public-regarding to purely self-interested—and hybrids in between (e.g., paternalistic, partisan, ideological, or reputation-driven goals). To fix ideas, consider two polar illustrations. On one end, a benevolent elite aims to adopt the policy that best serves overall welfare but can do so only once sufficient consensus forms. Because individuals revise their views gradually as conditions and information evolve, popular support may lag the welfare-improving choice; the elite’s role is to inform and persuade so that consensus eventually endorses it. On the other end, a rent-seeking elite advances policies that advantage its own members even at the expense of aggregate welfare; our model is silent about which one of these cases take place, it simply assumes that the elites have some ideal policy in mind. 
\end{remark}

Throughout the analysis, we operationalize polarization in the public as the distance from complete consensus—that is, from either \(p = 0\) or \(p = 1\). In our binary setting, a convenient “distance-from-consensus” index is $\frac{1}{2}-\bigl|p-\tfrac12\bigr|$, which is maximal when opinions are evenly split, \(p=\tfrac12\), and declines symmetrically as the distribution of opinions concentrates near one of the extremes. This formulation corresponds to the standard treatment of polarization as dispersion or disagreement in collective beliefs: a society is minimally polarized when a single position is widely held, and maximally polarized when the population is divided into two equally sized and mutually opposed groups (\cite{DiMaggioEvansBryson1996}).

In our binary setting this notion of polarization is closely related to variance-based measures of polarization commonly used. In our case, the variance in public opinion is given by
\[
\operatorname{Var}(\text{opinion}) = p(1-p),
\]
which attains its maximum at \(p = \tfrac{1}{2}\) and declines as \(p\) moves toward either 0 or 1. Note that these measures are all equivalent up to strictly monotone transformation: \(p(1-p)=\tfrac14-(p-\tfrac12)^2\) is strictly increasing in \(d(p)\) and strictly decreasing in \(|p-\tfrac12|\). Accordingly, since our results are ordinal (comparative statics), nothing depends on the particular formula, and we use whichever representation is algebraically most convenient.

\begin{remark}
It is important to note, however, that the broader literature has advanced several alternative conceptions of polarization. Some contributions emphasize identification with distinct poles and derive indices that explicitly reward both intra-group homogeneity and inter-group separation (\cite{EstebanRay1994,DuclosEstebanRay2004}). Others stress the structure of disagreement across social or partisan groups. For instance, measuring the distance between partisan means rather than overall variance (\cite{Dalton2008}). A more recent strand studies \emph{affective} polarization, namely the intensification of negative evaluations of out-groups, which can rise even when the distribution of policy views remains stable (\cite{iyengar2019origins}). Our choice to work with the variance-like measure should therefore be read as an analytically convenient specification that aligns with our binary framework, rather than as a claim that it exhausts all the concept of social polarization.   
\end{remark}

\subsection{Simple Two-Period Model}
We begin with the simplest case, where there are two periods, $T = 2$. The elite’s optimization problem can be expressed as a pair of Bellman equations, written as functions of the initial public opinion and the current state:
\begin{align*}
V_{1,s}(p) &:= \max_{p'\in[0,1]} \, u(y(p'), s) - c(p'-p)
\;+\; \beta\,\left[\pi V_{2,1}(p') + (1-\pi)V_{2,0}(p')\right] \\ 
V_{2,s}(p) &:= \max_{p'\in[0,1]} \, u(y(p'), s) - c(p'-p) .
\end{align*}

We solve for the optimal policy using backward induction.
Let $\Delta = c^{-1}(H)$ denote the positive solution to $c(\Delta) = H$.
The following claim then characterizes the elite’s choice in the second period.  
\begin{claim}[Period-2 policy and value]\label{claim:oneElite_period2}
Define the cutoffs
\[
p_0^* \;:=\; \frac{1}{2} - \Delta, 
\qquad
p_1^* \;:=\; \frac{1}{2} + \Delta.
\]
Then the period-2 optimal public opinion distribution, $p_s^*(p,s)$, is given by:
\[
p_2^*(p,1)=
\begin{cases}
p, & p\ge \tfrac12 \ \text{or}\ p\le p_0^*,\\
\tfrac12, & p_0^*<p<\tfrac12,
\end{cases}
\qquad
p_2^*(p,0)=
\begin{cases}
p, & p\le \tfrac12 \ \text{or}\ p\ge p_1^*,\\
\tfrac12, & \tfrac12<p<p_1^*.
\end{cases}
\]
The associated period-2 values are
\[
V_{2,1}(p)=
\begin{cases}
H, & p\ge \tfrac12,\\
H-c(\tfrac12-p), & p\in[p_0^*,\,\tfrac12],\\
0, & p<p_0^*,
\end{cases}
\qquad
V_{2,0}(p)=
\begin{cases}
H, & p\le \tfrac12,\\
H-c(p-\tfrac12), & p\in[\tfrac12,\,p_1^*],\\
0, & p>p_1^*.
\end{cases}
\]
\end{claim}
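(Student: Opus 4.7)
The plan is to work with the terminal Bellman equation $V_{2,s}(p)=\max_{p'\in[0,1]}\bigl[u(y(p'),s)-c(p'-p)\bigr]$, which has no continuation term, and exploit the binary structure of $u$: the elite either earns $H$ (by making $y(p')=s$) or earns $0$. This reduces the optimization, for each state $s$, to a comparison of two strategies: (i) leave opinion alone and accept whatever policy the current majority implies, or (ii) move $p'$ just enough to flip the majority in favor of $s$. The whole claim follows by carrying out this comparison in each state and each region of $p$, so I would split the argument into the two symmetric cases $s=1$ and $s=0$.

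First I would handle $s=1$. If $p\ge \tfrac12$, the elite already commands the majority (using the tie-breaking rule at $p=\tfrac12$), so setting $p'=p$ gives payoff $H$ at zero cost; no deviation $p'\ne p$ can improve on this because every deviation costs a strictly positive amount by Assumption 1 and the gross payoff is already at its maximum $H$. If $p<\tfrac12$, the elite must choose between inaction (value $0$) and shifting to some $p'\ge\tfrac12$. By strict convexity and monotonicity of $c$ in $|p'-p|$, among all $p'\ge\tfrac12$ the cost $c(p'-p)$ is minimized at $p'=\tfrac12$, so the best interventionist value is $H-c(\tfrac12-p)$. Intervening dominates inaction iff $c(\tfrac12-p)\le H$, which by the definition $\Delta=c^{-1}(H)$ and symmetry of $c$ is equivalent to $\tfrac12-p\le\Delta$, i.e. $p\ge p_0^\ast$. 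This yields $p_2^\ast(p,1)=\tfrac12$ with value $H-c(\tfrac12-p)$ on $[p_0^\ast,\tfrac12)$, and $p_2^\ast(p,1)=p$ with value $0$ on $[0,p_0^\ast)$, matching the claim.

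The case $s=0$ is entirely symmetric under $p\mapsto 1-p$: now the elite wants $p'\le\tfrac12$, the cheapest crossing point is again $p'=\tfrac12$, and the inaction/intervention comparison gives the cutoff $p_1^\ast=\tfrac12+\Delta$ with the stated policy and value on the three regions. I do not anticipate a real obstacle; the only subtlety is the tie-breaking convention at $p'=\tfrac12$, which I would flag explicitly so that the interventionist region is closed at $p_0^\ast$ (resp.\ $p_1^\ast$) and so that the elite is willing to sit exactly at $\tfrac12$ rather than strictly cross it. Putting the two cases together yields the formulas for $p_2^\ast(p,s)$ and $V_{2,s}(p)$ in the statement.
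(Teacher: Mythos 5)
Your proposal is correct and follows essentially the same argument as the paper's proof: leave opinion alone when the majority already favors the elite, otherwise compare inaction with the cheapest crossing move to $p'=\tfrac12$ (overshooting being dominated by convexity), with the cutoff determined by $c(\tfrac12-p)\lessgtr H$, i.e.\ by $\Delta=c^{-1}(H)$. The only difference is cosmetic — you treat the boundary case $p=p_0^*$ as intervention while the paper treats it as inaction; both are optimal there since the two values coincide at $0$.
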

The proof is in Appendix \ref{app:proofClaim1}.
Claim \ref{claim:oneElite_period2} describes the elite’s optimal choice of public opinion in the second period, when there are no future payoffs or uncertainty. The claim implies that the elite leaves opinion unchanged if it already favors its preferred policy or if shifting it would be too costly. Otherwise, when influence costs are moderate, the elite adjusts public sentiment just enough to reach $\tfrac{1}{2}$, the minimal threshold for implementing its desired policy.

We can now describe the optimal public opinion that the elite would choose in the first period, when facing future uncertainty on the second period optimal policy. First, It will be convenient to partition \([0,1]\) into the four regions
\[
A=[0,p_0^*],\qquad B=[p_0^*,\tfrac12],\qquad C=[\tfrac12,p_1^*],\qquad D=[p_1^*,1].
\]
Using claim \ref{claim:oneElite_period2}, we can calculate the expected continuation value in period 2 as a function of the chosen public opinion share, \(p'\):
\begin{equation}
\label{eq:EV2_piecewise}
\E_{s'}[V_{2,s'}(p')] =
\begin{cases}
H(1-\pi), & p'\in A,\\[2pt]
H-\pi\,c(\tfrac12-p'), & p'\in B,\\[2pt]
H-(1-\pi)\,c(p'-\tfrac12), & p'\in C,\\[2pt]
H\pi, & p'\in D.
\end{cases}
\end{equation}
We further define 
\begin{align*}
p_{B,\max}(p) &\in \arg\min_{q\in [p_0^*,\,\tfrac12]}\Big\{\, c(q-p)+\beta\,\pi\,c(\tfrac12-q)\,\Big\},\\
p_{C,\max}(p) &\in \arg\min_{q\in [\tfrac12,\,p_1^*]}\Big\{\, c(q-p)+\beta\,(1-\pi)\,c(q-\tfrac12)\,\Big\}.
\end{align*}
which we know exist, by assumption \ref{ass:cost_assumption}. The following claim describes the optimal decision in the first period.

\begin{claim}[Single Elite optimal Strategy]\label{claim:single_elite_two_period}
Fix inherited support \(p\in[0,1]\) and state \(s\in\{0,1\}\) in period 1. 
Then:
\begin{enumerate}
\item[(i)] (\emph{Candidate set}) An optimal period-1 choice \(p'_1{}^*(p,s)\) belongs to
\[
\big\{\, p,\; p_{B,\max}(p),\; p_{C,\max}(p),\; \tfrac12 \,\big\}.
\]
\item[(ii)] (\emph{Polarization pull}) Every optimal \(p'\) satisfies
\[
\big|\,p'-\tfrac12\,\big|\;\le\;\big|\,p-\tfrac12\,\big|.
\]
Equivalently, the elite either leaves support unchanged or moves it weakly toward higher polarization, \(\tfrac12\), but never strictly away from it.
\end{enumerate}

\end{claim}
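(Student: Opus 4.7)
The plan is to prove part (i) by decomposing the first-period objective region by region and identifying the maximizer in each piece, and then to prove part (ii) candidate by candidate via a case analysis on where $p$ sits, with a mirror-point argument handling the subtle case of candidates on the opposite side of $\tfrac12$ from $p$. For part (i), write the first-period objective as
\[
f(p',p,s) := u(y(p'),s) - c(p'-p) + \beta \, \mathbb{E}_{s'}[V_{2,s'}(p')],
\]
and combine (\ref{eq:EV2_piecewise}) with the step structure of $u$, which is constant on $[0,\tfrac12)$ and on $(\tfrac12,1]$ but jumps at $p'=\tfrac12$ because of the tie-breaking rule $y(\tfrac12)=s$, to see that $f$ is piecewise on $A,B,C,D$. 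On $A$ and $D$ the continuation value is constant, so maximizing $f$ reduces to minimizing $c(p'-p)$, whose maximizer on the closed interval is either $p$ itself (if $p$ lies in the interval) or the endpoint $p_0^*$ or $p_1^*$. On $B$ and $C$, the inner maximization coincides exactly with the minimization problems defining $p_{B,\max}(p)$ and $p_{C,\max}(p)$. Since $p_0^*$ is feasible for $p_{B,\max}$ and $\mathbb{E}_{s'}[V_{2,s'}]$ is continuous across the $A$–$B$ boundary, the candidate $p_0^*$ is weakly dominated by $p_{B,\max}$; symmetrically $p_1^*$ is dominated by $p_{C,\max}$. Listing the remaining point $p'=\tfrac12$, which must be retained separately because of the $u$-jump, yields the candidate set in (i).

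For part (ii), verify $|p'-\tfrac12|\le|p-\tfrac12|$ at each candidate. The cases $p'=p$ and $p'=\tfrac12$ are immediate. For $p'=p_{B,\max}$ or $p_{C,\max}$, I would split on the region containing $p$. When $p\in A\cup D$, we have $|p-\tfrac12|\ge\Delta$ while every $q\in[p_0^*,p_1^*]$ satisfies $|q-\tfrac12|\le\Delta$, so the bound is automatic. When $p$ lies on the same side of $\tfrac12$ as the candidate, say $p\in B$ and $p'=p_{B,\max}$, a short direct comparison suffices: for any $q<p$ in the feasible set $[p_0^*,\tfrac12]$, both $c(q-p)>0$ and $c(\tfrac12-q)>c(\tfrac12-p)$ by strict monotonicity of $c$ in $|\cdot|$, so the minimizer cannot lie below $p$; hence $p_{B,\max}\in[p,\tfrac12]$ and the bound holds. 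The symmetric case $p\in C$, $p'=p_{C,\max}$ is identical.

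The delicate case is when the candidate lies on the opposite side of $\tfrac12$ from $p$, for instance $p\in C$ and $p'=p_{B,\max}$. Here I would use a mirror-point argument: the reflection $q':=1-p$ lies in $[p_0^*,\tfrac12]$ (since $p\in[\tfrac12,p_1^*]$), so $q'$ is feasible for the $p_{B,\max}$ program. If the minimizer $q^*$ satisfied $|q^*-\tfrac12|>|p-\tfrac12|$, i.e., $q^*<1-p$, then both $|q^*-p|>|q'-p|$ and $|\tfrac12-q^*|>|\tfrac12-q'|$, so Assumption \ref{ass:cost_assumption} yields strictly larger values in both terms of $g(q):=c(q-p)+\beta\pi c(\tfrac12-q)$, contradicting minimality of $q^*$. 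The symmetric case ($p\in B$, $p'=p_{C,\max}$) is handled identically. The main obstacle is precisely this threshold-crossing case, since the same-side comparison breaks down once $p$ and the candidate lie on opposite sides of $\tfrac12$; the reflection trick depends essentially on the symmetry of $c$ guaranteed by Assumption \ref{ass:cost_assumption}.
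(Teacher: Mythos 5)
Your proposal is correct, and part (i) follows the paper's own route: decompose the objective over the regions $A,B,C,D$, identify the regional maximizers, and eliminate the cutoffs $p_0^*,p_1^*$. Part (ii), however, is argued differently. The paper proves the polarization pull by a domination argument on arbitrary choices: for $p\le\tfrac12$ any $p'>\tfrac12$ is strictly dominated by $p'=\tfrac12$ (weakly better current payoff, strictly lower cost, weakly higher continuation), so crossing the median can only stop at $\tfrac12$, and the same-side fact $p_{B,\max}(p)\in[p,\tfrac12]$ finishes the argument; opposite-side candidates are thus ruled out as optima rather than analyzed. You instead verify the inequality $|p'-\tfrac12|\le|p-\tfrac12|$ directly at every candidate, and for the cross-threshold case ($p\in C$, candidate $p_{B,\max}$) you use a reflection argument exploiting the symmetry of $c$: the mirror point $1-p$ is feasible, and any $q^*<1-p$ is strictly worse in both terms of $c(q-p)+\beta\pi c(\tfrac12-q)$. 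That reflection step is valid and even establishes a slightly stronger fact than the paper needs (the bound holds at $p_{B,\max}$ itself, not merely that such candidates are never chosen). Two small caveats to tighten: first, your elimination of $p_0^*,p_1^*$ is stated as \emph{weak} domination via continuity at the $A$--$B$ boundary, but Claim~\ref{claim:single_elite_two_period}(ii) is about \emph{every} optimal $p'$, so you should add that strict convexity makes $p_{B,\max}$ the unique minimizer on $[p_0^*,\tfrac12]$, hence $p_0^*$ is \emph{strictly} dominated whenever $p_0^*\neq p_{B,\max}$ (the paper's $\varepsilon$-perturbation in its Step~2 plays exactly this role); second, the statement that the inner maximization on $B$ ``coincides exactly'' with the $p_{B,\max}$ program needs the qualifier that $u$ jumps at $p'=\tfrac12$ when $s=1$, which you already handle by keeping $\tfrac12$ as a separate candidate.
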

The proof is in Appendix \ref{app:proofClaim2}. Claim \ref{claim:single_elite_two_period} states that, given the initial distribution \(p_0\in[0,1]\) and realized state \(s\in \{0,1\}\) in period $1$, the elite’s optimal target \(p_1\) lies in the four-point candidate set \(\{p_0, p_{B,\max}(p_0), p_{C,\max}(p_0), \tfrac12\}\). These correspond to three move types: (i) inaction, \(p_1 = p_0\); (ii) a modest shift toward the polarization peak at \(\tfrac12\) via the interior maximizers \(p_{B,\max}(p_0)\) or \(p_{C,\max}(p_0)\); or (iii) a discrete jump to the middle, \(p_1 = \tfrac12\), which yields maximal polarization. Moreover, any optimal adjustment weakly reduces \(|p_1 - \tfrac12|\): the elite never moves opinion strictly away from the median and always, weakly, toward maximal polarization.

The intuition is straightforward. Under majority rule and uncertainty about next period’s optimal policy, the elite has an incentive to polarize public opinion. A more polarized distribution reduces the swing margin required if the preferred policy changes in the next period, making future policy pivots cheaper. Consequently, a more fractionalized society improves the elite's ability to respond to uncertainty.

Figure \ref{fig:twoPeriodDemonstration}(a)–(b) illustrates the first-period trade-off. It decomposes the elite’s problem into the benefit of choosing \(p_1\) and the cost, given \(p_0\), under each state \(s \in \{0,1\}\). In both panels, the solid blue curve shows the period-1 benefit of \(p_1\) (current payoff plus discounted expected continuation value), and the dashed red curve shows the convex adjustment cost \(c(p_1 - p_0)\). The optimal \(p_1\) maximizes the vertical gap between benefit and cost. As the figure makes clear, absent adjustment costs the present value of benefits is maximized at \(p_1 = \tfrac12\): maximal polarization guarantees implementation of the elite’s preferred policy in the current period and, if needed, in the next.

The figure highlights two regimes. (i) When \(p_1\) lies far from \(\tfrac12\) (regions $A$ or $D$), inaction in the second period \((p_2 = p_1)\) yields locally flat benefits in the current period. (ii) For intermediate \(p_1\) (regions $B$ or $C$), the value is increasing as \(p_1\) approaches \(\tfrac12\), because the elite will need to move opinion less in the next period, if needed, to implement their preferred policy. Where the elite ultimately positions society depends on adjustment costs and initial beliefs: when beliefs are highly consensual (close to 0 or 1), the cost of creating polarization may outweigh its insurance value, whereas when society is already somewhat polarized, the elite finds it attractive to further increase polarization to hedge against future policy shocks.

Even when the elite and the public are initially aligned on the optimal policy, forward-looking considerations can justify additional polarization. Tilting mass toward $\tfrac12$ increases the continuation value by keeping future policy changes cheap when shocks occur. In this sense, maintaining a polarized society is a way to preserve flexibility in the face of uncertainty.

\begin{figure}[H]
    \centering
    \includegraphics[width=\linewidth]{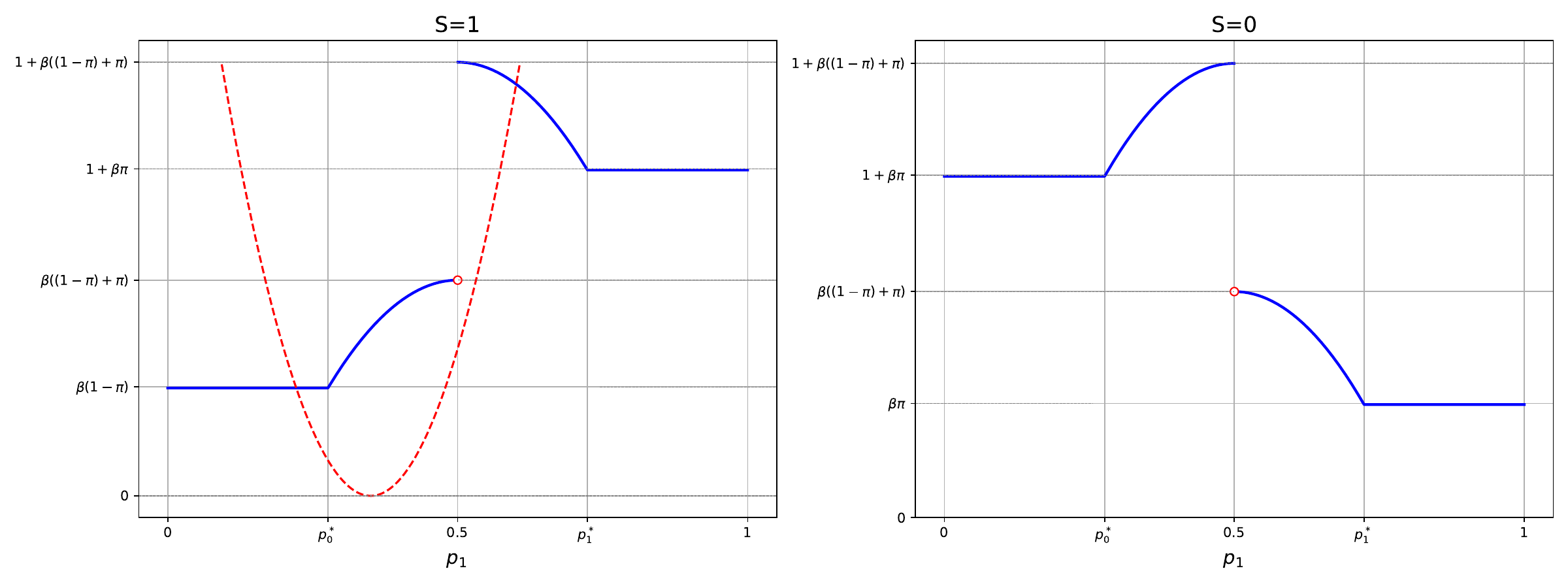}
    \caption{Value Function Benefits and Costs}
    \label{fig:twoPeriodDemonstration}
\end{figure}

\subsection{One Elite with Infinite Horizon}\label{subsec:one-elite-infinite}
We extend our simple two-period game to a long-horizon setting to demonstrate the generalizability of our mechanism and to show that the effect is not driven by terminal values. We maintain the same assumptions as before, letting $T \to \infty$. Let $V_s(p)$ denote the value function when the inherited support is p and the current state is s. The Bellman equation is:
\begin{equation}\label{eq:oneEliteInfiniteHorizon}
V_s(p)\ =\ \max_{p'\in[0,1]}\Big\{u(y(p'),s)-c(p'-p)\ +\ \beta\big[\pi\,V_1(p')+(1-\pi)\,V_0(p')\big]\Big\}.
\end{equation}
Let \( \sigma(s,p) \) be the policy function for the elite given state $s$ and $p$. The following proposition shows that the elite would weakly converge over time to $\frac{1}{2}$. 

\begin{prop}[Single Elite with Infinite Horizon]\label{claim:SingleInfiniteHorizon}
The value function \( V_s(p) \) and the optimal policy \( \sigma(s,p) \) satisfy:
\begin{enumerate}
    \item \textbf{Peak at the median.} For each \(s\in\{0,1\}\), \(V_s(\cdot)\) is weakly increasing on \( [0,\tfrac12] \), weakly decreasing on \( [\tfrac12,1] \), and attains a (not necessarily unique) maximum at \( p=\tfrac12 \).
    \item \textbf{Polarization pull and monotone policy.} For every \(s\) and \(p\),
    \[
    \min\{p,\tfrac12\}\ \le\ \sigma(s,p)\ \le\ \max\{p,\tfrac12\},
    \qquad\text{equivalently}\qquad
    |\sigma(s,p)-\tfrac12|\ \le\ |p-\tfrac12|.
    \]
    Moreover, on each side of the median the optimal policy is monotone in the inherited support:
    \[
      \text{if }0\le p<\bar p\le \tfrac12,\quad \sigma(s,p)\ \le\ \sigma(s,\bar p)\ \le\ \tfrac12,
      \qquad
      \text{and symmetrically for }\tfrac12\le \bar p<p\le 1.
    \]
    In particular, \(p=\tfrac12\) is absorbing: \(\sigma(s,\tfrac12)=\tfrac12\).
\end{enumerate}
\end{prop}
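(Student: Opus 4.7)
The plan is to cast $V_s$ as the unique fixed point of the Bellman operator $T$ defined by \eqref{eq:oneEliteInfiniteHorizon} and to exhibit a closed subspace that $T$ preserves. Standard arguments—$\beta<1$, flow payoffs bounded in $[0,H]$, and Blackwell's conditions—imply that $T$ is a sup-norm contraction on bounded upper semicontinuous functions on $[0,1]\times\{0,1\}$, and that the maximum in \eqref{eq:oneEliteInfiniteHorizon} is attained despite the jump of $u(y(\cdot),s)$ at $p'=\tfrac12$, since the objective is USC in $p'$ on a compact set. Let $\mathcal{S}$ denote the subspace of pairs $(V_0,V_1)$ in which each $V_s$ is weakly increasing on $[0,\tfrac12]$ and weakly decreasing on $[\tfrac12,1]$. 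Since $\mathcal{S}$ is closed in the sup norm, it suffices to prove $T(\mathcal{S})\subseteq\mathcal{S}$; the fixed point then lies in $\mathcal{S}$, which delivers part~(1).

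The core invariance step goes as follows. Fix $(V_0,V_1)\in\mathcal{S}$ and set $\bar V(p'):=\pi V_1(p')+(1-\pi)V_0(p')$, which inherits single-peakedness at $\tfrac12$ as a convex combination. The within-period reward $W_s(p'):=u(y(p'),s)+\beta\bar V(p')$ is then weakly monotone on each side of $\tfrac12$ with a jump at $\tfrac12$ that always points toward the peak (upward from the left for $s=1$, upward from the right for $s=0$), and hence attains its maximum at $p'=\tfrac12$. I then establish a per-iteration polarization pull: for $p\le\tfrac12$, any maximizer $p^*$ of $W_s(p')-c(p'-p)$ lies in $[p,\tfrac12]$. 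If $p^*<p$, then $p'=p$ weakly dominates, because $c(0)=0$ and $W_s(p)\ge W_s(p^*)$ by monotonicity of $W_s$ on $[0,\tfrac12]$; if $p^*>\tfrac12$, then $p'=\tfrac12$ weakly dominates, because $c(\tfrac12-p)<c(p^*-p)$ by strict monotonicity of $c$ in $|\cdot|$ and $W_s(\tfrac12)\ge W_s(p^*)$. Using this containment, for $p<\bar p\le\tfrac12$ a case split on whether $p^*(p)\le\bar p$ or $p^*(p)>\bar p$ yields $(TV)_s(\bar p)\ge(TV)_s(p)$: in the first case $(TV)_s(\bar p)\ge W_s(\bar p)\ge W_s(p^*(p))\ge(TV)_s(p)$; in the second $(TV)_s(\bar p)\ge W_s(p^*(p))-c(p^*(p)-\bar p)\ge W_s(p^*(p))-c(p^*(p)-p)=(TV)_s(p)$. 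The mirror argument on $[\tfrac12,1]$ closes the invariance.

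Part~(2) follows by specializing to the fixed point $V^*$. The polarization-pull inclusion $\sigma(s,p)\in[\min\{p,\tfrac12\},\max\{p,\tfrac12\}]$ is exactly the per-iteration containment applied to $V^*$, and the absorbing property $\sigma(s,\tfrac12)=\tfrac12$ is immediate from it. For monotonicity of $\sigma$ in $p$, I invoke Topkis's monotone comparative statics theorem: the objective $W_s(p')-c(p'-p)$ has increasing differences in $(p,p')$ because $c$ is convex (explicitly, for $p_1<p_2$ and $p_1'<p_2'$, $c(p_1'-p_1)+c(p_2'-p_2)\le c(p_1'-p_2)+c(p_2'-p_1)$). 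Restricting $p'$ to $[p,\tfrac12]$, which contains an optimizer by polarization pull, Topkis guarantees a weakly monotone selection from the argmax correspondence; strict convexity of $c$ ensures that any interior maximizer is unique, so $\sigma(s,\cdot)$ is weakly increasing on $[0,\tfrac12]$, with the symmetric statement on $[\tfrac12,1]$ following from the mirror argument.

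The main technical subtlety—and the place where I expect to have to be careful—is the jump discontinuity of $u(y(\cdot),s)$ at $p'=\tfrac12$, which precludes standard smooth-objective arguments and forces one to work with upper semicontinuity and ordinal (monotonicity-based) comparisons. Fortunately the jump always points toward the peak at $\tfrac12$ in both states, so it cannot break the single-peakedness invariance; keeping the argument at the level of monotonicity rather than curvature is what makes the subtlety benign.
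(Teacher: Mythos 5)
Your architecture matches the paper's proof almost step for step: a Blackwell contraction on bounded functions, an invariant class of functions single-peaked at $\tfrac12$ that is closed under uniform limits, a no-overshooting/domain-reduction observation that yields the polarization pull and the absorbing property at $\tfrac12$, and an increasing-differences argument in $(p,p')$ for policy monotonicity. Your case-split proof that $(Tv)_s$ is weakly increasing on $[0,\tfrac12]$ (splitting on whether $p^*(p)\le\bar p$) is a harmless variant of the paper's shifted-point construction, and your treatment of the jump at $\tfrac12$ is consistent with the paper's tie-breaking convention (its modified return $R$).

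The one genuine problem is in the final step. You assert that ``strict convexity of $c$ ensures that any interior maximizer is unique.'' That does not follow: the objective is $W_s(p')-c(p'-p)$, and $W_s$ embeds the continuation value, which in this model is generally neither concave nor differentiable (the paper stresses exactly this), so the objective need not be strictly concave and the argmax can be multivalued. Without uniqueness, weak increasing differences plus Topkis delivers only \emph{some} weakly monotone selection from the argmax correspondence, not monotonicity of the optimal policy $\sigma(s,\cdot)$ itself, which is what the proposition asserts for any maximizer. The repair is immediate and is exactly the paper's route: strict convexity of $c$ makes your displayed inequality strict whenever $p_1<p_2$ and $p_1'<p_2'$ (it is the strict version of the majorization inequality, since $p_1'-p_2$ and $p_2'-p_1$ strictly bracket $p_1'-p_1$ and $p_2'-p_2$ with the same sum), so the objective has \emph{strictly} increasing differences; the standard revealed-preference step of adding the two optimality inequalities then shows that every maximizer at $p$ lies weakly below every maximizer at $\bar p$ for $0\le p<\bar p\le\tfrac12$, with the mirror statement on $[\tfrac12,1]$. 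With that substitution your proof coincides in substance with the paper's.
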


The proof is in Appendix \ref{app:proofProp1}. The proposition shows that the intuition from the two-period case extends to the general dynamic setting: Social polarization continues to exert a strong pull. Part (i) establishes that the value function is maximized when public opinion is most polarized, reflecting a force toward societies that are easy to shift in response to changes in the state of the world. Part (ii) shows that elites optimally move public opinion weakly toward \(p = \tfrac12\), thereby moving public opinion toward the point of maximal polarization.

The next claim demonstrates how the speed in which the elite increase polarization increase as the cost of shaping the public opinion goes down. We start with the following definition  
\begin{definition}[Cost dominance]\label{ass:cost-dominance}
Let $c,\tilde c:\mathbb{R}\to\mathbb{R}_+$ satisfy Assumption~\ref{ass:cost_assumption}.
We say that $\tilde c$ \emph{cost-dominates} $c$ if for all $0\le x_1<x_2$,
\[
\big[\tilde c(x_2)-\tilde c(x_1)\big]\ >\ \big[c(x_2)-c(x_1)\big].
\]
\end{definition}

This definition captures our notion of higher cost function, in the sense that $\tilde c$ has uniformly larger marginal increments away from the status quo (on the relevant side), i.e., it is ``more convex'' in the sense of increasing differences. The following claim shows that more convex cost implies a slower convergence.

\begin{claim}[One-step move shrinks under costlier technology]\label{lem:shrink-move}
Fix a state $s\in\{0,1\}$ and an inherited support $p\in[0,1]$. Define the one-step objectives
\[
\Psi_c(p';s,p):=R(s,p')-c(p'-p)+\beta \left[\pi V_1(p') + (1-\pi)V_0(p')\right],
\]
and equiviantly for $\Psi_{\tilde{c}}$. Let $p^*_c\in\arg\max_{p'\in[0,1]}\Psi_c(p';s,p)$ and $p^*_{\tilde c}\in\arg\max_{p'\in[0,1]}\Psi_{\tilde c}(p';s,p)$.
If $\tilde c$ cost-dominates $c$, then
\[
\text{if }p\le\tfrac12,\quad p^*_{\tilde c}\ \le\ p^*_c;\qquad
\text{if }p\ge\tfrac12,\quad p^*_{\tilde c}\ \ge\ p^*_c.
\]
\end{claim}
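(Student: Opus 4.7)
The result is a standard revealed-preference (single-crossing) comparative statics argument, with Proposition~\ref{claim:SingleInfiniteHorizon} providing the key input that localizes the two optimizers to the regime in which cost dominance is defined. I would treat the case $p\le\tfrac12$ first; the case $p\ge\tfrac12$ follows by a mirror-image argument using the symmetry $c(x)=c(-x)$ guaranteed by Assumption~\ref{ass:cost_assumption}. Throughout, let
\[
B(p')\ :=\ R(s,p')+\beta\bigl[\pi\,V_1(p')+(1-\pi)\,V_0(p')\bigr]
\]
denote the benefit component, which is common to $\Psi_c$ and $\Psi_{\tilde c}$ in the one-step formulation.

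\emph{Step 1 (range restriction).} By the polarization pull in Proposition~\ref{claim:SingleInfiniteHorizon}, any maximizer of $\Psi_c(\cdot;s,p)$ or $\Psi_{\tilde c}(\cdot;s,p)$ lies in $[p,\tfrac12]$. In particular, both $p^*_c-p$ and $p^*_{\tilde c}-p$ are nonnegative, so the cost-dominance hypothesis applies on the relevant side.

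\emph{Step 2 (revealed preference).} Suppose for contradiction that $p^*_{\tilde c}>p^*_c$. Optimality of each choice under its own cost yields
\begin{align*}
B(p^*_c)-c(p^*_c-p) &\ \ge\ B(p^*_{\tilde c})-c(p^*_{\tilde c}-p), \\
B(p^*_{\tilde c})-\tilde c(p^*_{\tilde c}-p) &\ \ge\ B(p^*_c)-\tilde c(p^*_c-p).
\end{align*}
Adding cancels $B$ and gives
\[
\tilde c(p^*_{\tilde c}-p)-\tilde c(p^*_c-p)\ \le\ c(p^*_{\tilde c}-p)-c(p^*_c-p).
\]
But Definition~\ref{ass:cost-dominance}, applied with $x_1=p^*_c-p$ and $x_2=p^*_{\tilde c}-p$ (where $0\le x_1<x_2$), delivers the strict reverse inequality, a contradiction. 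Hence $p^*_{\tilde c}\le p^*_c$, and the $p\ge\tfrac12$ case follows symmetrically.

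\emph{Main obstacle.} The argument itself is mechanical; the one subtlety is that it treats $B$---and hence the value functions $V_0,V_1$---as common to the two cost specifications. This is exactly what the ``one-step'' framing in the claim asserts: we fix the continuation and vary only the current-period cost. Extending to the full dynamic equilibrium, where the value functions themselves re-solve under $\tilde c$, would require an additional monotonicity step showing that the shape properties of Proposition~\ref{claim:SingleInfiniteHorizon} survive the change in technology (via, e.g., induction on a monotone Bellman operator), but this lies beyond what the stated claim asserts.
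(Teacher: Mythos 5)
Your proposal is correct and matches the paper's proof essentially verbatim: the paper also restricts both optimizers to $[p,\tfrac12]$ (via Lemma~\ref{lem:domain-reduction}, the domain-reduction/no-overshooting lemma underlying the polarization pull, which applies under either cost since it only needs the peak property of $V$ and a cost increasing in distance) and then runs exactly your revealed-preference addition of the two optimality inequalities against Definition~\ref{ass:cost-dominance}. Your closing remark about the continuation values being held fixed is also consistent with how the claim is stated and proved in the paper.
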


The proof is in Appendix \ref{app:proofClaim3}. The claim shows how declining persuasion costs accelerate elites’ movement toward polarization. As AI and large language models enable more precise and large-scale targeting, at lower costs, a single elite increasingly prefers a polarized environment in which public consensus is easier to shape. Incentives for polarization have always been present, but historically high costs of persuasion—due to inefficient technology—slowed the process. Claim \ref{lem:shrink-move} shows that as these costs fall, elites move society toward maximal polarization more rapidly.

To understand how convergence to high polarization depends on model parameters, we solve for the value function under a quadratic adjustment cost \(k(p' - p)^2\). Figure \ref{fig:oneEliteInfHorizon} plots the resulting policy functions for various parameterizations and for both states of the world. Across specifications, the policy converges toward the maximally polarized benchmark. In particular, when the initial majority preference (p) disagrees with the elite’s preferred policy, the elite pushes more aggressively toward polarization, as seen in the steep jumps toward \(p'=\tfrac12\) for higher values of \(p\) in state \(s=0\) and for lower values of \(p\) in state \(s=1\).

Figure \ref{fig:oneEliteInfHorizon_valueFunction} in the Appendix reports the corresponding value functions. The figure shows a pronounced inverted-U shape, with the value maximized when society is as polarized as possible. The figure also illustrates that the value function is generally neither concave nor differentiable. The non-differentiability reflects the majority-rule payoff structure, which creates sharp payoff jumps in response to marginal changes in public opinion around the majority threshold.

We now examine how changes in the perssuasion costs affects the elites optimal policy. Panels \(a\) and \(b\) of Figure \ref{fig:oneEliteInfHorizon} vary the persuasion cost parameter. The green line corresponds to high persuasion costs. In this case, the elite leaves most belief profiles essentially unchanged: adjustment is too expensive except when initial opinions are already divided, in which case the elite is willing to incur the high cost to push further toward maximal polarization and secure its preferred policy. As costs decline—capturing improved persuasion technologies enabled by AI—the orange and blue lines show that elites intervene over a much wider range of initial beliefs, shifting public opinion more quickly toward highly polarized configurations. These panels also highlight that when current public opinion is misaligned with the elite’s ideal policy, incentives to polarize are particularly strong: in the blue case, whenever the majority initially opposes the elite’s preferred policy, the elite chooses a rapid move toward the polarized benchmark.

Panels (c) and (d) vary a different dimension of AI-driven change: improved forecasting of future states. We capture this by changing \(\pi\), the probability of one state of the world. As \(\pi\) increases, the environment becomes more predictable, which alters the “insurance” role of polarization. When the likely state already prevails (e.g., we are in state 1 and \(\pi\) is high), reduced uncertainty weakens the incentive to adjust public beliefs: the elite can expect its preferred policy to remain optimal without incurring adjustment costs. By contrast, when society is temporarily in the *less likely* state, a higher \(\pi\) raises the expected likelihood that this state will soon be overturned. Even though the elite still chooses \(p' \le \tfrac12\) so that the current policy continues to prevail, it becomes more valuable to move opinions *closer* to the threshold \(p'=\tfrac12\) today, because doing so reduces the adjustment cost of crossing the threshold when the more likely state re-materializes and the long-run optimal policy flips. These transitional episodes—from a relatively unlikely state to a more stable one—therefore generate spikes in polarization as elites reposition public opinion in anticipation of the impending regime change.

Finally, Figures \ref{fig:singleEliteBetaH} and \ref{fig:singleEliteBetaH_valueFunction} in the Appendix examine how the discount factor \(\beta\) and the stakes of policy (through \(H\)) shape the single-elite policy function. In all cases, the core “polarization pull’’ from Proposition 1 remains; \(\beta\) and \(H\) only scale its strength. Panels (a) and (b) vary \(\beta\). When the discount factor is low \((\beta = 0.5)\), the elite is effectively more myopic: the policy function lies close to the 45-degree line away from the majority threshold, and large regions of \(p\) map into almost no adjustment. The elite pays persuasion costs mainly when current support lies on the “wrong’’ side of the threshold and flipping today’s policy is sufficiently valuable. As \(\beta\) rises, the policy curves bend more sharply toward \(p'=\tfrac12\): the inaction region shrinks, and the set of initial beliefs for which the elite chooses a discrete jump toward the median expands.

Panels (c) and (d) vary the per-period payoff \(H\). When \(H\) is small \((H = 0.5)\), the benefit–cost gap from shifting opinions is modest, so elites tolerate a wide range of inherited beliefs without intervening. As (H) increases, the policy functions move closer to the maximally polarized outcome \(p'=\tfrac12\). The cutoffs at which the elite is willing to pay for adjustment move outward, and more initial conditions trigger a substantial jump toward \(p' \approx \tfrac12\).

Taken together, figure \ref{fig:singleEliteBetaH} shows that both greater patience and higher stakes strengthen the polarization pull in the single-elite environment. When the elite expects to remain influential for a long time or faces very high returns to policy alignment, it invests more aggressively in shifting public opinion into highly polarized regions that minimize expected future adjustment costs.

\begin{figure}[H]
    \centering
    \includegraphics[scale=0.3]{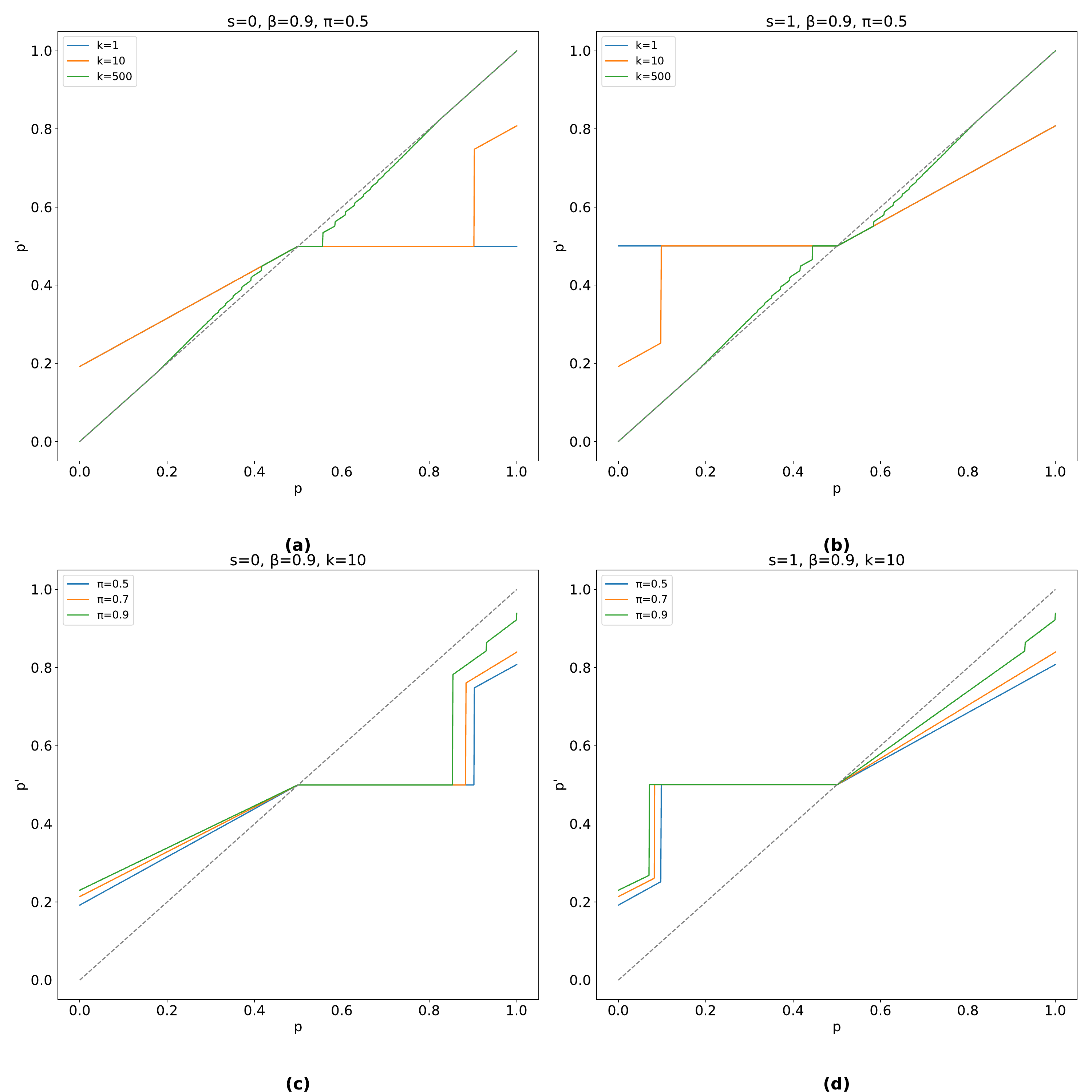}
    \caption{Single Elite - Policy Function}
    \label{fig:oneEliteInfHorizon}
\end{figure}

\section{Two Competing Elites}\label{sec:twoElites}

The previous results showed that a single elite governing through mass acceptance has strong incentives to increase polarization among the general public, as doing so provides insurance against future shifts in its ideal policy. We now turn to the case of two competing elites and examine how their interaction shapes the optimal preference distribution each elite chooses.

Time is discrete, $t = \{0, 1, 2, \ldots ,T \}$. At each date $t$, a state $s_t \in \{0,1\}$ is drawn independently and identically distributed with probability $\Pr(s_t = 1) = \pi \in (0,1)$, and is observed by the ruling elite (the “mover”). Two elites, $A$ and $B$, alternate in power deterministically: elite $A$ governs at odd periods, and elite $B$ governs at even periods.

The public’s preference state is $p_{t} \in [0,1]$. The ruling elite chooses the current period’s preference $p_{t+1} \in [0,1]$ at a cost $c(p_{t+1} - p_{t})$, where the cost function satisfies Assumption \ref{ass:cost_assumption}. Notice that $p_{t+1}$ represents the period $t$ public opinion that governs which policies can be implemented in that period, as well as the initial level of public support for policy 1 in period $t+1$. Both elites share the same cost function. The policy implemented in each period is $y = 1$ if $p_{t+1} > \tfrac{1}{2}$. When $p_{t+1} = \tfrac{1}{2}$, the mover selects $y \in \{0,1\}$. This setup reflects the idea that the ruling elite each period holds substantially greater influence over public opinion than the non-ruling opposition.

Elites $A$ and $B$ are assumed to have fundamentally opposing preferences: in every state of the world, their ideal policies diverge. Specifically, elite $A$ seeks to \emph{match} the state, while elite $B$ seeks to \emph{mismatch} it:
\[
u_A(y,s) = H \cdot \mathbbm{1}\{y = s\}, \qquad
u_B(y,s) = H \cdot \mathbbm{1}\{y = 1 - s\}, \qquad H > 0.
\]
Finally, both elites share the same discount factor, $\beta \in (0,1)$.

\subsection{Two-Period Stackelberg Equilibrium}
We begin by analyzing a simple two-period setting, $t \in \{1,2\}$. Elite $A$ moves in period $1$ (from an initial $p_0$) after observing $s_1$, while elite $B$ moves in period $2$ after observing $s_2$ and given the move of elite $A$. Let $\sigma_{A,s_1}$ and $\sigma_{B,s_2}$ denote the strategies of elites $A$ and $B$, respectively, under each state of the world.

Elite $A$’s objective, for a given state $s_1$, can be expressed as the following optimization problem:
\[
\max_{p_1\in [0,1]} \; \left\{ u_A\!\big(y(p_1),s_1\big)
	-	c(p_1 - p_0) \;+\; \beta\,\E_{s_2}\Big[\,u_A\big(y(\sigma_{B,s_2}(p_1)),s_2\big)\Big]\right\},
\]
while elite B solves, in the second period,
\[
\max_{p_2 \in [0,1]} \left\{ u_B\big(y(p_2),s_2\big) - c(p_2 - p_1) \right\}, \quad \text{for each } s_2.
\]
The Stackelberg equilibrium of this game is characterized by a pair of policy functions:
$\sigma_{A,s_1} : [0,1] \to [0,1]$ for elite $A$, and $\sigma_{B,s_2} : [0,1] \to [0,1]$ for elite $B$, that jointly solve the optimization problems of the two elites.

We again denote by $\Delta = c^{-1}(H)$ the positive solution to $c(\Delta) = H$. The following claim characterizes the equilibrium strategies in the Stackelberg equilibrium.

\begin{claim}[Two-Period Stackelberg equilibrium]\label{claim:2p_stackleberg}
Fix \(p_0\) and \(s_1\). Elite \(A\) chooses one of the four cases \(\{p_0,\,\tfrac12,\,\tfrac12\pm\Delta\}\) by comparing the value of these four scalars
\[
\begin{aligned}
&\text{Inaction:} && u_A\!\big(y(p_0),s_1\big) \;+\; \beta\cdot \Phi(p_0),\\
&\text{Median:} && H - c\big(p_0-\tfrac12\big) \;+\; \beta\cdot 0,\\
&\text{Semi-lock right:} && u_A(1,s_1) - c\!\big((\tfrac12+\Delta)-p_0\big) \;+\; \beta\cdot \pi H,\\
&\text{Semi-lock left:} && u_A(0,s_1) - c\!\big(p_0-(\tfrac12-\Delta)\big) \;+\; \beta\cdot (1-\pi)H,
\end{aligned}
\]
where
\[
\Phi(p_0)=
\begin{cases}
(1-\pi) H, & p_0 \le \tfrac{1}{2}-\Delta,\\
\pi H, & p_0 \ge \tfrac{1}{2}+\Delta,\\
0, & |p_0-\tfrac{1}{2}| < \Delta.
\end{cases}
\]
and Elite B have the following strategy
\[
\sigma_{B,s_2}(p_1)=
\begin{cases}
p_1, & \text{if } \mathbbm{1}\{p_1 \ge \tfrac12\}=1-s_2,\\[3pt]
\tfrac12, & \text{if } \mathbbm{1}\{p_1 \ge \tfrac12\}=s_2 \text{ and } |\tfrac12-p_1 |<  \Delta,\\[3pt]
p_1 & \text{if } \mathbbm{1}\{p_1 \ge \tfrac12\}=s_2 \text{ and } |\frac{1}{2} - p_1| \geq \Delta,
\end{cases}
\]

\end{claim}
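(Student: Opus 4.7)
The plan is to solve the game by backward induction: first characterize elite $B$'s best response in period $2$, then compute the expected continuation value that $A$ inherits from choosing $p_1$, and finally show that $A$'s problem reduces to a comparison of four scalar values.

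First, I would analyze $B$'s static problem. Given $(p_1, s_2)$, $B$'s payoff is $H\cdot\mathbbm{1}\{y(p_2)=1-s_2\}-c(p_2-p_1)$, and the policy implemented at inaction is $\mathbbm{1}\{p_1\ge\tfrac12\}$. If this indicator already equals $1-s_2$, any movement is strictly costly without benefit, so $\sigma_B(p_1)=p_1$. Otherwise, the cheapest way to flip the policy is to move exactly to $\tfrac12$: by strict convexity and symmetry of $c$, any $p_2$ that crosses the threshold is weakly more costly than $\tfrac12$, and at $\tfrac12$ the tie-break lets $B$ implement $1-s_2$. The flip is worthwhile iff $c(|\tfrac12-p_1|)<H$, i.e.\ $|\tfrac12-p_1|<\Delta$. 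With the stated tie-breaking at $|\tfrac12-p_1|=\Delta$ (inaction), this yields the three-case expression for $\sigma_{B,s_2}$ in the claim.

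Second, I would substitute $\sigma_B$ into $A$'s continuation payoff $\Phi(p_1):=\E_{s_2}[u_A(y(\sigma_{B,s_2}(p_1)),s_2)]$ and do a case split on the location of $p_1$ relative to $\{\tfrac12-\Delta,\tfrac12+\Delta\}$. In the outer regions, $B$ cannot afford to move; $A$'s policy survives, and $A$ wins exactly when $s_2$ matches the side of $\tfrac12$ on which $p_1$ lies, giving $(1-\pi)H$ on the left and $\pi H$ on the right. In the inner region $|p_1-\tfrac12|<\Delta$, whenever the current policy matches $A$'s ideal $B$ will move to $\tfrac12$ and pick the opposing policy, so $A$'s expected continuation is $0$. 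This produces the piecewise $\Phi$ stated as the inaction value in the claim.

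Third, I would argue that $A$'s objective
\[
f(p_1)\;=\;u_A(y(p_1),s_1)-c(p_1-p_0)+\beta\,\Phi(p_1)
\]
has a benefit term $u_A(y(p_1),s_1)+\beta\Phi(p_1)$ that is piecewise constant with break points only at $\tfrac12-\Delta$, $\tfrac12$, and $\tfrac12+\Delta$. On every subinterval where the benefit is constant, the residual $-c(p_1-p_0)$ is strictly concave in $p_1$ and hence maximized either at $p_1=p_0$ (if $p_0$ lies inside) or at the boundary point of that subinterval nearest to $p_0$. A second observation kills the two "inner" subintervals $(\tfrac12-\Delta,\tfrac12)$ and $(\tfrac12,\tfrac12+\Delta)$: each has the same implemented policy $y$ as its outer neighbor but strictly lower $\Phi=0$, so moving to the outer boundary $\tfrac12-\Delta$ or $\tfrac12+\Delta$ dominates stopping strictly inside. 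Combining these two facts, every optimizer lies in $\{p_0,\,\tfrac12-\Delta,\,\tfrac12,\,\tfrac12+\Delta\}$, and evaluating $f$ at each point reproduces the four expressions in the statement.

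The main obstacle is the bookkeeping in the case analysis: handling the tie-breaking at $|\tfrac12-p_1|=\Delta$ and at $p_1=\tfrac12$, and verifying that when $p_0$ itself sits in an outer region the "inaction" branch of $\Phi$ already contributes the relevant $\beta(1-\pi)H$ or $\beta\pi H$, so that the corresponding semi-lock candidate is weakly dominated by inaction and no additional candidates are needed. Once these corner cases are checked, the four-point candidate characterization and Elite $B$'s response together pin down the Stackelberg equilibrium.
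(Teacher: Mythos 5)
Your overall route---backward induction for elite $B$, computing the continuation value $\Phi$, and then reducing elite $A$'s problem to a four-point comparison by exploiting a piecewise-constant benefit plus a strictly convex cost---is the same strategy as the paper's proof; the paper just carries out the reduction through an explicit three-case analysis on $p_0$ rather than your ``constant-benefit pieces'' shortcut. Your first two steps (the characterization of $\sigma_{B,s_2}$ and of $\Phi$) are correct.

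The problem is your ``second observation.'' It is not true that moving to the outer boundary $\tfrac12-\Delta$ (resp.\ $\tfrac12+\Delta$) dominates every point strictly inside the inner band $(\tfrac12-\Delta,\tfrac12)$ (resp.\ $(\tfrac12,\tfrac12+\Delta)$). Two failures: (i) if $p_0$ itself lies strictly inside the band, inaction costs nothing while the jump to the semi-lock point costs $c\bigl(p_0-(\tfrac12-\Delta)\bigr)$, which can exceed the gain $\beta(1-\pi)H$ (take $s_1=0$ and $\beta$ small: inaction already pays $H$ today and is the unique optimum, yet it sits ``strictly inside'' the band you claim to kill---this is exactly the Inaction candidate in the claim); (ii) if $p_0\ge\tfrac12$ and $p_1\in(\tfrac12-\Delta,\tfrac12)$, the outer boundary is \emph{farther} from $p_0$ than $p_1$, so the cost comparison runs the wrong way; the correct dominating point there is $\tfrac12$, whose benefit $H$ is weakly higher and whose cost is strictly lower. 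The statement that actually delivers the four-point set is: on each constant-benefit piece only $p_0$ (if it lies in that piece) or the endpoint of the piece nearest to $p_0$ can be optimal, and for the open inner bands any $p_1\neq p_0$ is dominated by $p_0$ when $p_0$ lies in the same band, and otherwise by whichever adjacent endpoint ($\tfrac12\mp\Delta$ or $\tfrac12$) is nearer to $p_0$, since that endpoint carries weakly higher benefit and strictly lower cost. With this repair your argument coincides with the paper's case-by-case domination proof, and your evaluation of the four candidate scalars and of elite $B$'s strategy is as stated.
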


The proof is in Appendix \ref{app:proofClaim4}. Claim \ref{claim:2p_stackleberg} characterizes the strategic logic of the two–periods game between elites $A$ and $B$. In the second period, elite $B$ either leaves public opinion at $p_1$ or nudges it just enough to flip the majority, provided the cost of doing so is not too high. Anticipating this, the first–period leader, elite $A$, chooses among four options. Elite $A$ can (i) stay put, (ii) jump directly to maximal polarization, knowing that elite $B$ may exploit this in the next period, or (iii–iv) move to one of two “semi–lock” positions that lie just outside the range in which elite $B$ can cheaply induce its ideal policy. These semi–lock locations are valuable precisely because they limit elite $B$’s ability to profitably deviate and implement policies that elite $A$ dislikes.

Rivalry between elites adds a strategic dimension: each seeks to shape public preferences to constrain the rival’s response. In the single-elite case, without strategic interaction, polarization is pushed to its extreme to insure against future uncertainty in the ideal policy. With two polarized elites, however, the ruling elite faces a countervailing incentive—to “lock in” public opinion and temper polarization—to prevent the rival from later advancing unfavorable policies.

To build intuition, Figure \ref{fig:benefit_cost_2_elites}, analogous to Figure \ref{fig:twoPeriodDemonstration}, decomposes elite $A$’s period 1 payoffs into benefits and costs from each $p_1$, given some initial $p_0$. The blue line shows the sum of the current payoff and the discounted continuation value from choosing different values of $p_1$, for each state ($s_1\in\{0,1\}$). The red dashed line shows the cost of choosing $p_1$ when $p_0=0.35$. As in Figure \ref{fig:twoPeriodDemonstration}, the net value of choosing $p_1$ is the vertical difference between the blue and red lines.

The plot shows four strategic regions: (i) to the left of \(\tfrac{1}{2}-\Delta\), (ii) a left neighborhood of \(\tfrac{1}{2}\), (iii) a right neighborhood of \(\tfrac{1}{2}\), and (iv) to the right of \(\tfrac{1}{2}+\Delta\). These regions are determined by elite $B$’s willingness to intervene and change public opinion. Far enough from \(\tfrac{1}{2}\), elite $B$ either already obtains its preferred policy or finds overturning the majority too costly; in these regions, elite $A$’s continuation value depends only on whether the optimal policy next period remains unchanged. In contrast, near \(\tfrac{1}{2}\), elite $B$ can profitably flip the majority and implement a policy that elite $A$ dislikes, so elite $A$’s continuation value falls to zero. This generates the downward steps at \(\tfrac{1}{2}\pm\Delta\).

Comparing Figures \ref{fig:benefit_cost_2_elites} and \ref{fig:twoPeriodDemonstration} shows that, with two elites, the maximum net benefit is no longer attained at maximal polarization. Instead, the lock–in region, where social cohesion is high, delivers the highest value. Taking elite $B$’s incentives into account thus pushes the current elite toward preserving cohesion rather than maximizing polarization. The choice of $p_1$ is determined by comparing the net benefits in each region to the cost of moving away from $p_0$. 

The cost function determines the size of the deterrence regions. Higher costs enlarge the range in which elite $A$ prefers not to move, while lower costs make it attractive for both elites to intervene more aggressively to reshape public opinion and implement their preferred policies. Moreover, lower costs shrink the “lock-in’’ region at the extremes, implying that elite $A$, when optimal given the cost, would choose a lower level of public polarization.

\begin{figure}[H]
    \centering
    \includegraphics[scale=0.5]{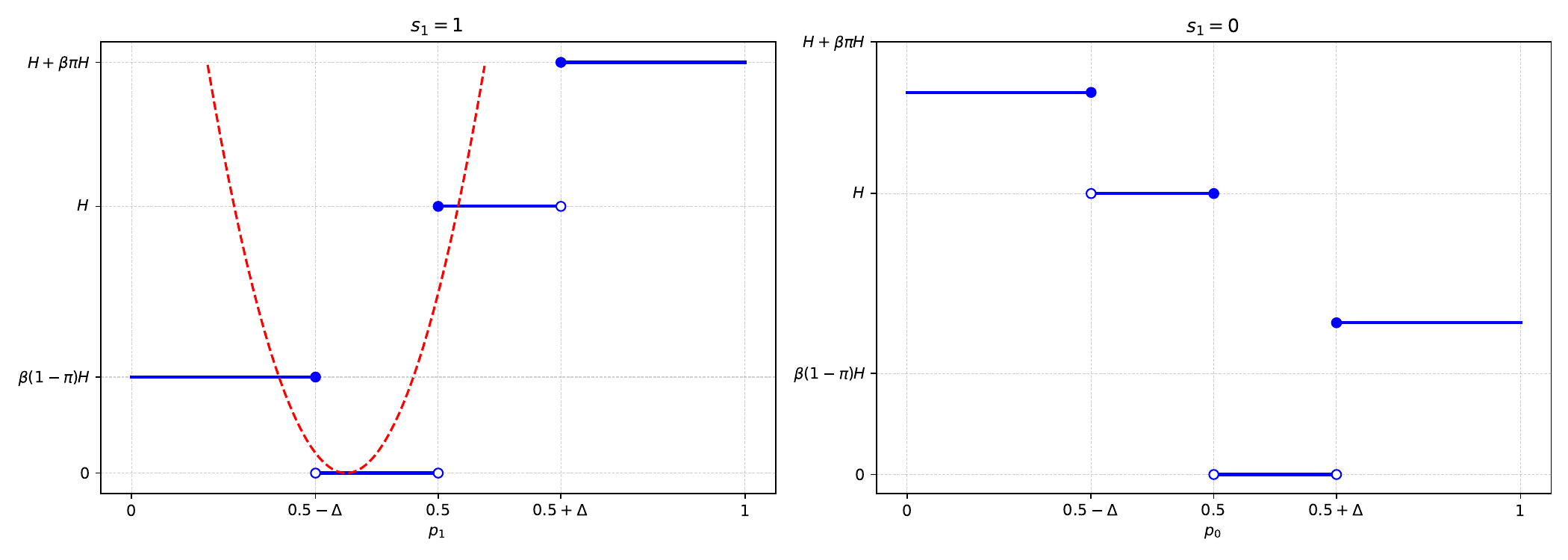}
    \caption{Benefits and Costs in the two elites Stackelberg Game}
    \label{fig:benefit_cost_2_elites}
\end{figure}

\subsection{Two competing Elites with a Long Horizon}

We now turn to the behavior of two elites facing a long horizon. We maintain the same assumptions, allowing $T \to \infty$. Next, we define the Bellman equation of each elite. Let $V_{A,s}(p)$ be $A$'s value when $A$ moves and observes $s$ at state $p$, and $U_A(p)$ be $A$'s value when $B$ moves from state $p$. Define $(V_{B,s},U_B)$ symmetrically. With the median rule, the Bellman system is
\begin{align}
V_{A,s}(p) &= \max_{p'\in[0,1]} \left\{
u_A\big(y(p'), s\big)
- c(p'-p) + \beta U_A(p')\right\} \label{eq:VA_clean}\\
U_A(p) &= \E_{s'\sim\pi}\Big[
u_A\big(y(\sigma_{B,s'}(p)), s'\big)
+\beta\, \E_{s''\sim\pi}V_{A,s''}(\sigma_{B,s'}(p))\Big], \label{eq:UA_clean}
\end{align}
and the analogous system for $(V_{B,s},U_B)$ with match/mismatch swapped. The Bellman equation decomposes each elite’s value into the immediate payoff, $u_A(y(p'), s) - c(p'-p)$, and the discounted continuation value, $\beta U_A(p')$. Here $V_{A,s}(p)$ is the value when $A$ moves and observes signal $s$, while $U_A(p)$ is the \emph{expected} value of elite $A$, when $B$ moves from state $p$, averaging over future states.

Following the literature on dynamic games with a single payoff-relevant aggregate state, we focus on Markov-Perfect Equilibria (MPE) in which strategies depend only on the current distribution of public opinion and the realization of signals. This rules out equilibria supported by non-Markovian, history-dependent threats and delivers a tractable yet still rich characterization.

Specifically, an MPE is a measurable profile $\{\sigma_{A,s},\sigma_{B,s}\}_{s\in\{0,1\}}$ and bounded value functions that solve \eqref{eq:VA_clean}–\eqref{eq:UA_clean} and their elite $B$-counterparts.

To understand how the different parameters shape equilibrium behavior between the two elites, we solve for Markov-perfect equilibrium (MPE) strategies numerically. In our environment, majority rule with a discontinuous threshold at \(p = \tfrac12\), strictly convex adjustment costs, and forward-looking strategic interaction between two elites generically produces non-concave, kinked value functions and multiple locally stable regions in the state space. This is exactly the type of setting for which the literature suggests using numerical MPE methods rather than closed-form characterization. For the numeric exercise, we assume a quadratic adjustment cost \(c(p' - p) = k(p' - p)^2\) with parameter \(k\), and set a long horizon \(T = 600\). We then compute the MPE by iterating on value and policy functions over a discretized state space, following standard practice in dynamic games and dynamic Industrial Organization.\footnote{See, for example, \citet{PakesMcGuire1994} and \citet{EricsonPakes1995} for classic algorithms in dynamic oligopoly, and \citet{DoraszelskiPakes2007} and \citet{AguirregabiriaMira2010} for surveys of numerical methods in dynamic games and dynamic discrete choice.}

Figure \ref{fig:2Elites_policyFunction} shows the resulting policy function for various parameter values. Relative to the single-elite case, the policy function no longer necessarily implies convergence to maximal polarization and, depending on the parameter values, can push toward higher or lower polarization. Therefore we again see that polarized elites can act as a damping force for polarization, introducing an incentive to lock in public opinion. Figure \ref{fig:2Elites_valueFunction} in the Appendix shows the corresponding value function. The figure shows that value functions are in general non-convex and not increasing. Moreover, compared to the single-elite case, the value is \emph{not} maximized at $p_1 = \frac{1}{2}$ and, depending on the parameter values, can exhibit a strong pull toward a more cohesive society with lower polarization. 

How does better technology affect behavior in the two-elite case? Panels (a)–(d) in Figure \ref{fig:2Elites_policyFunction} show the policy for different cost functions. When the cost is high (the green line), there are large regions of inaction and these regions are absorbing, capturing the same intuition as in the Two-Period game: preventing the opposition from enacting its desired policy can induce elites to maintain a cohesive society with low polarization. When the costs of changing public preferences become cheaper for both elites, there is a stronger pull toward high polarization, $p=\tfrac12$. When the cost is low, there is convergence toward maximal polarization. The reason is that there is no ability to deter the opposition from enacting its policy, and therefore it is better to polarize society to be able to enact the desired policy when possible. 

Panels (e)–(f) in Figure \ref{fig:2Elites_policyFunction} show the potential effect of lowering uncertainty due to better predictive technology or due to lower policy uncertainty in less uncertain times. The figure shows that when uncertainty is high, $\pi=\tfrac{1}{2}$, and cost is relatively high ($k=10$), elites are in more cases likely to push toward a cohesive society when current public opinion is aligned with their ideal goal, and toward a more polarized society in the reverse case. As uncertainty goes down (the orange and green lines), there is a stronger push for polarization in the less likely state, but a stronger push toward social cohesion in the more likely state, again capturing the fact that polarization among the elites can push the general masses toward more social cohesion. 

Finally, Figure \ref{fig:twoElitesBetaH_policyFunction} and \ref{fig:twoElitesBetaH_valueFunction} in the Appendix show how changes in the reward $H$ and the discount factor $\beta$ affect the policy function and value function. In general, as the reward increases there is a stronger push toward polarization. Looking at the discount factor, the figure shows that increasing $\beta$ makes elites more forward-looking, so they are willing to pay higher current costs to place public opinion in “good” regions for the future, thereby shrinking inaction regions and pulling public opinion more strongly toward high-value states—either consensus (low polarization) when they currently have the majority, or more polarized configurations when they are disadvantaged—thus amplifying the dynamic forces behind both polarization and lock-in.

\begin{figure}[H]
    \centering
    \includegraphics[width=\linewidth]{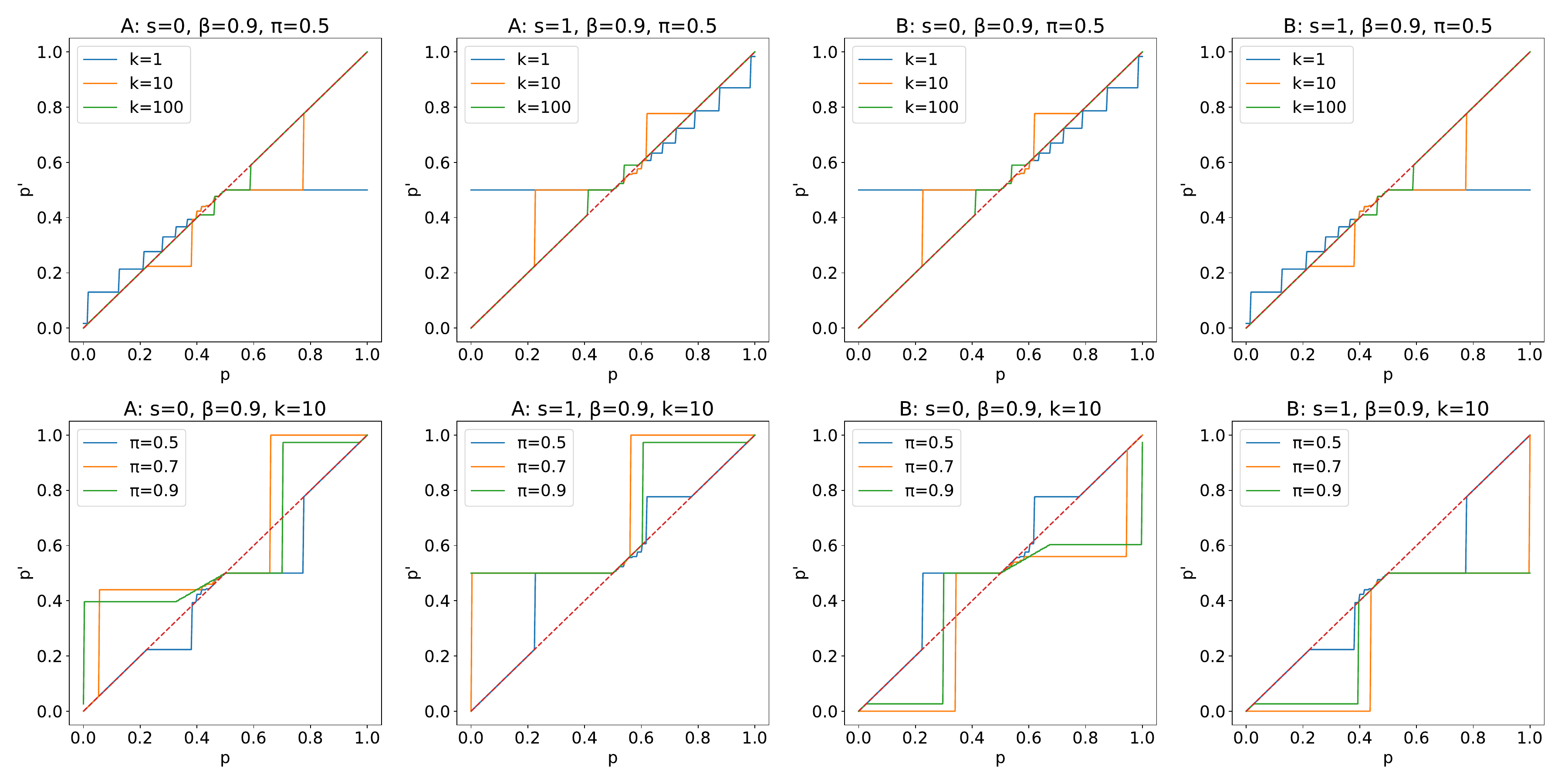}
    \caption{Two Competing Elites - Policy Function}
    \label{fig:2Elites_policyFunction}
\end{figure}

\section{Conclusions}\label{sec:conclusions}

Democratic governance rests on a fundamental tension: policies require public support to be implemented, yet shifting public opinion is costly. This paper introduces a novel mechanism that emerges from this tension—when elites must secure majority approval under uncertainty, they face strong incentives to manufacture polarization. The driving force is strategic insurance: a polarized society, with public opinion clustered near the decision threshold, minimizes the costs of aligning the public preferences with the future policy shocks. Counter-intuitively, the very institutions designed to constrain power majority rule and consensus requirements create conditions under which polarization becomes optimal from the elite's perspective.

Three results crystallize this logic. First, when there is a single elite it finds useful to pull society toward the highest polarization. A highly polarized society becomes an attractor because it lowers the cost needed in the future to adjust  expected future adjustment costs. Second: cheaper persuasion accelerates polarization. As AI slashes the cost of influence—through LLMs, microtargeting, algorithmic amplification—elites move faster toward divided equilibria. Every technical advance that makes persuasion easier tilts the calculus toward fragmentation. Third: competition among the elites can reverse the dynamic. When rivals alternate power, today's leader faces a trade-off: polarization offers flexibility but hands opponents a society easy to recapture. This generates "lock-in" incentives, where elites foster cohesion to freeze out future competitors. Rivalry can dampen or amplify polarization depending on persuasion costs, control persistence, and policy volatility.

These findings connect directly to the recent literature on transformative AI. Recent advances in large language models, automated content generation, and real-time behavioral targeting have dramatically reduced the technical barriers to large-scale persuasion. Where traditional influence campaigns required extensive infrastructure—broadcasting networks, coordinated messaging, costly ground operations—AI enables precision influence at unprecedented scale and speed. Our model suggests this technological shift will have first-order political consequences. As AI reduces the cost of shaping the masses preferences, our results predict accelerated movement toward polarized equilibria in societies with a single-elite regimes. The results are less certain in the case where there are two competing elites, where there could be push towards more or less polarization, depending on the other driving forces. 

Importantly, our framework abstracts from the documented costs of polarization- including erosion of social cohesion, increased risk of political violence, and reduced capacity for collective problem-solving. A complete welfare analysis would weigh the elites' adjustment-cost savings against these social externalities. When polarization imposes substantial negative spillovers not internalized by decision-makers, even benevolent elites may generate inefficient outcomes.

The rise of AI-driven persuasion technologies marks a critical juncture for democratic governance. Understanding how these tools interact with elite incentives—and the conditions under which their effects can be constrained—is essential for ensuring that advances in AI strengthen rather than destabilize democratic institutions. The question is not whether elites will adopt cheaper persuasion technologies, but whether our institutional architecture can channel their use toward socially beneficial ends.

\bibliographystyle{plainnat} 
\bibliography{bib}

\begin{thebibliography}{38}
\providecommand{\natexlab}[1]{#1}
\providecommand{\url}[1]{\texttt{#1}}
\expandafter\ifx\csname urlstyle\endcsname\relax
  \providecommand{\doi}[1]{doi: #1}\else
  \providecommand{\doi}{doi: \begingroup \urlstyle{rm}\Url}\fi

\bibitem[Adena et~al.(2015)Adena, Enikolopov, Petrova, Santarosa, and Zhuravskaya]{Adena2015Radio}
Maja Adena, Ruben Enikolopov, Maria Petrova, Veronica Santarosa, and Ekaterina Zhuravskaya.
\newblock Radio and the rise of the nazis in prewar germany.
\newblock \emph{Quarterly Journal of Economics}, 130\penalty0 (4):\penalty0 1885--1939, 2015.
\newblock \doi{10.1093/qje/qjv030}.

\bibitem[Aguirregabiria and Mira(2010)]{AguirregabiriaMira2010}
Victor Aguirregabiria and Pedro Mira.
\newblock Dynamic discrete choice structural models: A survey.
\newblock \emph{Journal of Econometrics}, 156\penalty0 (1):\penalty0 38--67, 2010.

\bibitem[Alesina and Fuchs-Sch{\"u}ndeln(2007)]{AlesinaFuchsSchundeln2007}
Alberto Alesina and Nicola Fuchs-Sch{\"u}ndeln.
\newblock Goodbye lenin (or not?): The effect of communism on people's preferences.
\newblock \emph{American Economic Review}, 97\penalty0 (4):\penalty0 1507--1528, 2007.

\bibitem[Argyle et~al.(2025)Argyle, Busby, Gubler, Lyman, Olcott, Pond, and Wingate]{argyle2025testing}
Lisa~P Argyle, Ethan~C Busby, Joshua~R Gubler, Alex Lyman, Justin Olcott, Jackson Pond, and David Wingate.
\newblock Testing theories of political persuasion using ai.
\newblock \emph{Proceedings of the National Academy of Sciences}, 122\penalty0 (18):\penalty0 e2412815122, 2025.

\bibitem[Azzimonti(2011)]{Azzimonti2011}
Marina Azzimonti.
\newblock Barriers to investment in polarized societies.
\newblock \emph{American Economic Review}, 101\penalty0 (5):\penalty0 2182--2204, 2011.
\newblock \doi{10.1257/aer.101.5.2182}.

\bibitem[Bai et~al.(2025)Bai, Voelkel, Muldowney, Eichstaedt, and Willer]{bai2025llm}
Hui Bai, Jan~G Voelkel, Shane Muldowney, Johannes~C Eichstaedt, and Robb Willer.
\newblock Llm-generated messages can persuade humans on policy issues.
\newblock \emph{Nature Communications}, 16\penalty0 (1):\penalty0 6037, 2025.

\bibitem[Binder(1999)]{Binder1999}
Sarah~A. Binder.
\newblock The dynamics of legislative gridlock, 1947--96.
\newblock \emph{American Political Science Review}, 93\penalty0 (3):\penalty0 519--533, 1999.

\bibitem[Cantoni et~al.(2017)Cantoni, Zhang, Yang, Yuchtman, and Zhang]{Cantoni2017Curriculum}
Davide Cantoni, Y.~Jane Zhang, David~Y. Yang, Noam Yuchtman, and Y.~J. Zhang.
\newblock Curriculum and ideology.
\newblock \emph{Journal of Political Economy}, 125\penalty0 (2):\penalty0 338--392, 2017.
\newblock \doi{10.1086/690951}.

\bibitem[Dalton(2008)]{Dalton2008}
Russell~J. Dalton.
\newblock The quantity and the quality of party systems: Party system polarization, its measurement, and its consequences.
\newblock \emph{Comparative Political Studies}, 41\penalty0 (7):\penalty0 899--920, 2008.

\bibitem[De~la O(2013)]{DeLaO2013}
Ana~L. De~la O.
\newblock Do conditional cash transfers affect electoral behavior? evidence from a randomized experiment in mexico.
\newblock \emph{American Economic Review}, 103\penalty0 (1):\penalty0 1--28, 2013.

\bibitem[DellaVigna and Gentzkow(2010)]{dellavigna2010persuasion}
Stefano DellaVigna and Matthew Gentzkow.
\newblock Persuasion: empirical evidence.
\newblock \emph{Annu. Rev. Econ.}, 2\penalty0 (1):\penalty0 643--669, 2010.

\bibitem[DellaVigna and Kaplan(2007)]{DellaVigna2007FoxNews}
Stefano DellaVigna and Ethan Kaplan.
\newblock The fox news effect: Media bias and voting.
\newblock \emph{Quarterly Journal of Economics}, 122\penalty0 (3):\penalty0 1187--1234, 2007.
\newblock \doi{10.1162/qjec.122.3.1187}.

\bibitem[DiMaggio et~al.(1996)DiMaggio, Evans, and Bryson]{DiMaggioEvansBryson1996}
Paul DiMaggio, John Evans, and Bethany Bryson.
\newblock Have americans' social attitudes become more polarized?
\newblock \emph{American Journal of Sociology}, 102\penalty0 (3):\penalty0 690--755, 1996.

\bibitem[Doraszelski and Pakes(2007)]{DoraszelskiPakes2007}
Ulrich Doraszelski and Ariel Pakes.
\newblock A framework for applied dynamic analysis in {IO}.
\newblock \emph{Handbook of Industrial Organization}, 3:\penalty0 1887--1966, 2007.

\bibitem[Duclos et~al.(2004)Duclos, Esteban, and Ray]{DuclosEstebanRay2004}
Jean-Yves Duclos, Joan~M. Esteban, and Debraj Ray.
\newblock Polarization: Concepts, measurement, estimation.
\newblock \emph{Econometrica}, 72\penalty0 (6):\penalty0 1737--1772, 2004.

\bibitem[Enikolopov et~al.(2011)Enikolopov, Petrova, and Zhuravskaya]{EnikolopovPetrovaZhuravskaya2011}
Ruben Enikolopov, Maria Petrova, and Ekaterina Zhuravskaya.
\newblock Media and political persuasion: Evidence from russia.
\newblock \emph{American Economic Review}, 101\penalty0 (7):\penalty0 3253--3285, 2011.

\bibitem[Ericson and Pakes(1995)]{EricsonPakes1995}
Richard Ericson and Ariel Pakes.
\newblock Markov-perfect industry dynamics: A framework for empirical work.
\newblock \emph{Review of Economic Studies}, 62\penalty0 (1):\penalty0 53--82, 1995.

\bibitem[Esteban and Ray(1994)]{EstebanRay1994}
Joan-Maria Esteban and Debraj Ray.
\newblock On the measurement of polarization.
\newblock \emph{Econometrica}, 62\penalty0 (4):\penalty0 819--851, 1994.

\bibitem[Gentzkow et~al.(2011)Gentzkow, Shapiro, and Sinkinson]{GentzkowShapiroSinkinson2011}
Matthew Gentzkow, Jesse~M. Shapiro, and Michael Sinkinson.
\newblock The effect of newspaper entry and exit on electoral politics.
\newblock \emph{American Economic Review}, 101\penalty0 (7):\penalty0 2980--3018, 2011.

\bibitem[Hackenburg and Margetts(2024)]{hackenburg2024evaluating}
Kobi Hackenburg and Helen Margetts.
\newblock Evaluating the persuasive influence of political microtargeting with large language models.
\newblock \emph{Proceedings of the National Academy of Sciences}, 121\penalty0 (24):\penalty0 e2403116121, 2024.

\bibitem[Iyengar and Westwood(2015)]{IyengarWestwood2015}
Shanto Iyengar and Sean~J. Westwood.
\newblock Fear and loathing across party lines: New evidence on group polarization.
\newblock \emph{American Journal of Political Science}, 59\penalty0 (3):\penalty0 690--707, 2015.
\newblock \doi{10.1111/ajps.12152}.

\bibitem[Iyengar et~al.(2019)Iyengar, Lelkes, Levendusky, Malhotra, and Westwood]{iyengar2019origins}
Shanto Iyengar, Yphtach Lelkes, Matthew Levendusky, Neil Malhotra, and Sean~J Westwood.
\newblock The origins and consequences of affective polarization in the united states.
\newblock \emph{Annual review of political science}, 22\penalty0 (1):\penalty0 129--146, 2019.

\bibitem[Kamenica and Gentzkow(2011)]{kamenica2011bayesian}
Emir Kamenica and Matthew Gentzkow.
\newblock Bayesian persuasion.
\newblock \emph{American Economic Review}, 101\penalty0 (6):\penalty0 2590--2615, 2011.

\bibitem[Korinek(2024)]{korinek2024economics}
Anton Korinek.
\newblock The economics of transformative ai.
\newblock \emph{NBER Reporter}, 1\penalty0 (4):\penalty0 9--12, 2024.

\bibitem[Lee(2016)]{Lee2016}
Frances~E. Lee.
\newblock \emph{Insecure Majorities: Congress and the Perpetual Campaign}.
\newblock University of Chicago Press, Chicago, IL, 2016.
\newblock ISBN 9780226409047.

\bibitem[Manacorda et~al.(2011)Manacorda, Miguel, and Vigorito]{ManacordaMiguelVigorito2011}
Marco Manacorda, Edward Miguel, and Andrea Vigorito.
\newblock Government transfers and political support.
\newblock \emph{American Economic Journal: Applied Economics}, 3\penalty0 (3):\penalty0 1--28, 2011.

\bibitem[Martin and Yurukoglu(2017)]{MartinYurukoglu2017}
Gregory~J. Martin and Ali Yurukoglu.
\newblock Bias in cable news: Persuasion and polarization.
\newblock \emph{American Economic Review}, 107\penalty0 (9):\penalty0 2565--2599, 2017.
\newblock \doi{10.1257/aer.20160812}.

\bibitem[Mason(2018)]{Mason2018}
Lilliana Mason.
\newblock \emph{Uncivil Agreement: How Politics Became Our Identity}.
\newblock University of Chicago Press, Chicago, IL, 2018.
\newblock ISBN 9780226524542.

\bibitem[McCarty et~al.(2016)McCarty, Poole, and Rosenthal]{McCartyPooleRosenthal2016}
Nolan McCarty, Keith~T. Poole, and Howard Rosenthal.
\newblock \emph{Polarized America: The Dance of Ideology and Unequal Riches}.
\newblock Walras-Pareto Lectures. The MIT Press, Cambridge, MA, 2 edition, 2016.
\newblock ISBN 9780262528627.

\bibitem[Pakes and McGuire(1994)]{PakesMcGuire1994}
Ariel Pakes and Paul McGuire.
\newblock Computing markov-perfect {N}ash equilibria: Numerical implications of a dynamic differentiated product model.
\newblock \emph{RAND Journal of Economics}, 25\penalty0 (4):\penalty0 555--589, 1994.

\bibitem[Prat and Str{\"o}mberg(2013)]{PratStromberg2013}
Andrea Prat and David Str{\"o}mberg.
\newblock The political economy of mass media.
\newblock In \emph{Advances in Economics and Econometrics: Tenth World Congress, Vol. II}. Cambridge University Press, 2013.

\bibitem[Salvi et~al.(2025)Salvi, Ribeiro, Gallotti, and West]{Salvi2025GPT4Persuasiveness}
Francesco Salvi, Marco~Tulio Ribeiro, Riccardo Gallotti, and Robert West.
\newblock On the conversational persuasiveness of gpt-4.
\newblock \emph{Nature Human Behaviour}, 2025.

\bibitem[Schoenegger et~al.(2025)Schoenegger, Salvi, West, Ribeiro, et~al.]{Schoenegger2025LLMPersuasion}
Philipp Schoenegger, Francesco Salvi, Robert West, Marco~Tulio Ribeiro, et~al.
\newblock Large language models are more persuasive than incentivized human persuaders.
\newblock \emph{arXiv preprint}, 2025.

\bibitem[Shayo(2009)]{Shayo2009}
Moses Shayo.
\newblock A model of social identity with an application to political economy: Nation, class, and redistribution.
\newblock \emph{American Political Science Review}, 103\penalty0 (2):\penalty0 147--174, 2009.
\newblock \doi{10.1017/S0003055409090194}.

\bibitem[Simchon et~al.(2024)Simchon, Edwards, and Lewandowsky]{simchon2024persuasive}
Almog Simchon, Matthew Edwards, and Stephan Lewandowsky.
\newblock The persuasive effects of political microtargeting in the age of generative artificial intelligence.
\newblock \emph{PNAS nexus}, 3\penalty0 (2):\penalty0 pgae035, 2024.

\bibitem[Stokes(2005)]{Stokes2005}
Susan~C. Stokes.
\newblock Perverse accountability: A formal model of machine politics with evidence from argentina.
\newblock \emph{American Political Science Review}, 99\penalty0 (3):\penalty0 315--325, 2005.

\bibitem[Stokey et~al.(1989)Stokey, Lucas, and Prescott]{stokey1989recursive}
Nancy~L. Stokey, Robert~E. Lucas, and Edward~C. Prescott.
\newblock \emph{Recursive Methods in Economic Dynamics}.
\newblock Harvard University Press, Cambridge, MA, 1989.

\bibitem[Tappin et~al.(2023)Tappin, Wittenberg, Hewitt, Berinsky, and Rand]{tappin2023quantifying}
Ben~M Tappin, Chloe Wittenberg, Luke~B Hewitt, Adam~J Berinsky, and David~G Rand.
\newblock Quantifying the potential persuasive returns to political microtargeting.
\newblock \emph{Proceedings of the National Academy of Sciences}, 120\penalty0 (25):\penalty0 e2216261120, 2023.

\end{thebibliography}

\appendix
\section{Appendix}
\subsection{Figures}
\begin{figure}[H]
    \centering
    \includegraphics[width=\linewidth]{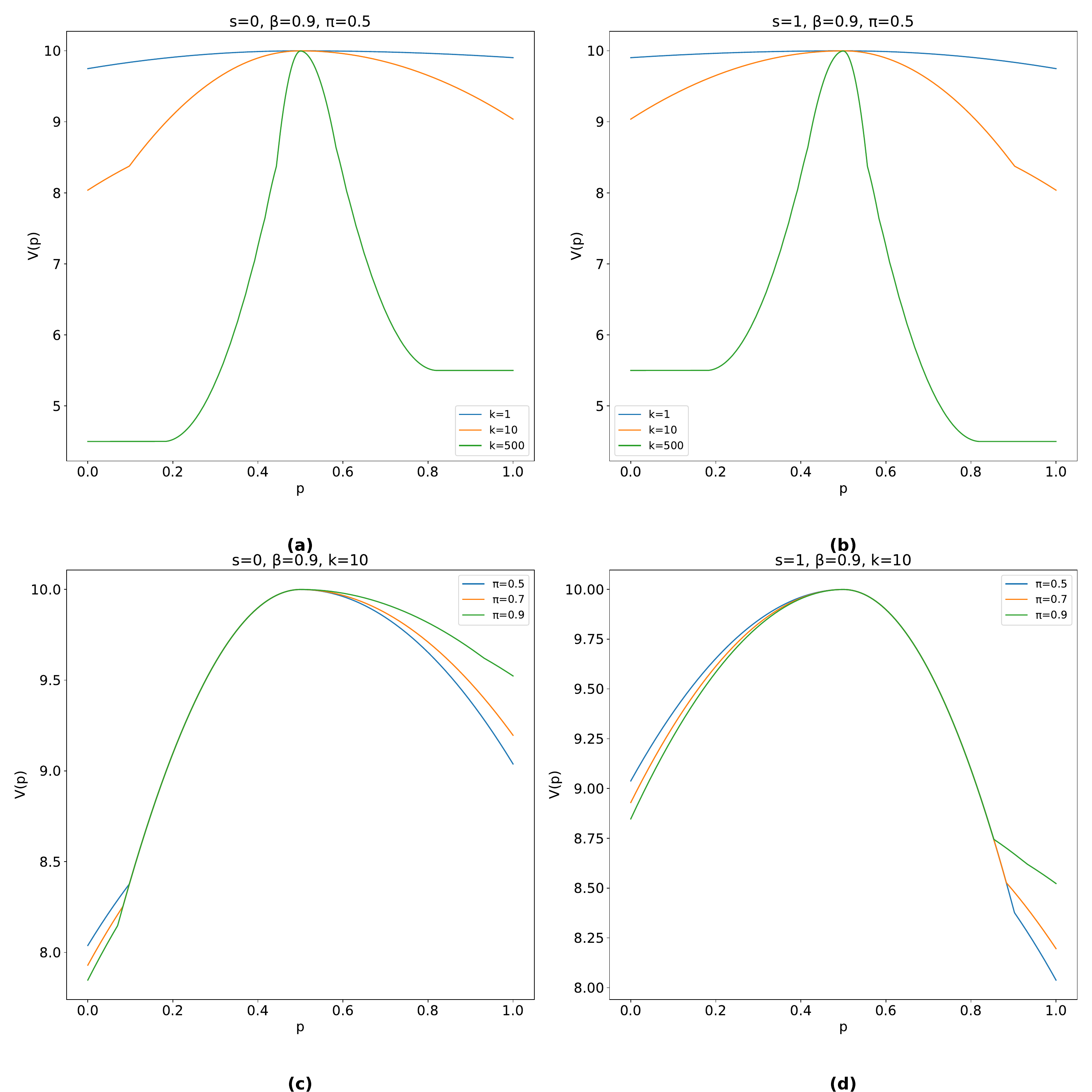}
    \caption{Single Elite - Value Function}
    \label{fig:oneEliteInfHorizon_valueFunction}
\end{figure}

\begin{figure}[H]
    \centering
    \includegraphics[width=\linewidth]{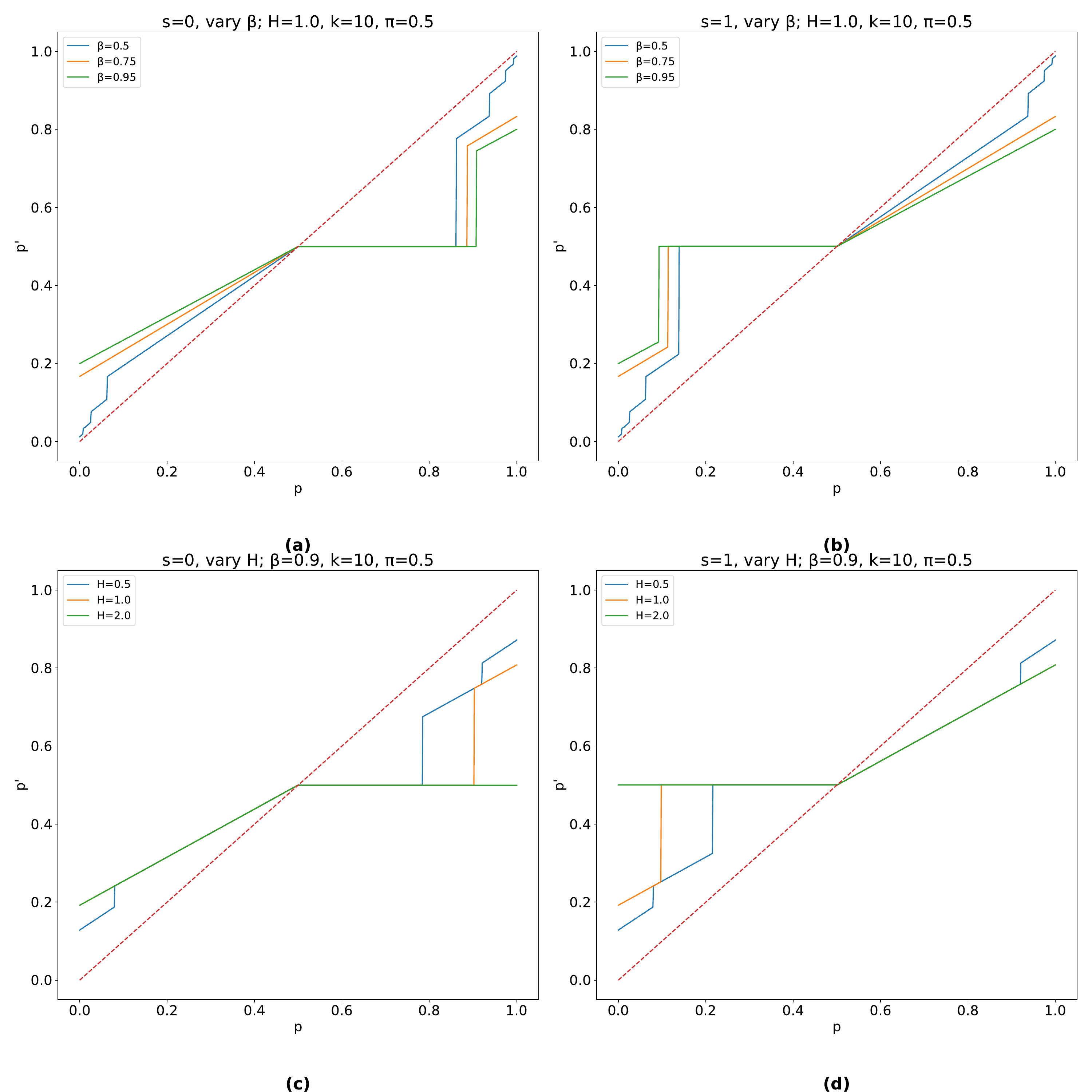}
    \caption{Single Elite - Policy Function - Variation in $\beta$ and $H$}
    \label{fig:singleEliteBetaH}
\end{figure}

\begin{figure}[H]
    \centering
    \includegraphics[width=\linewidth]{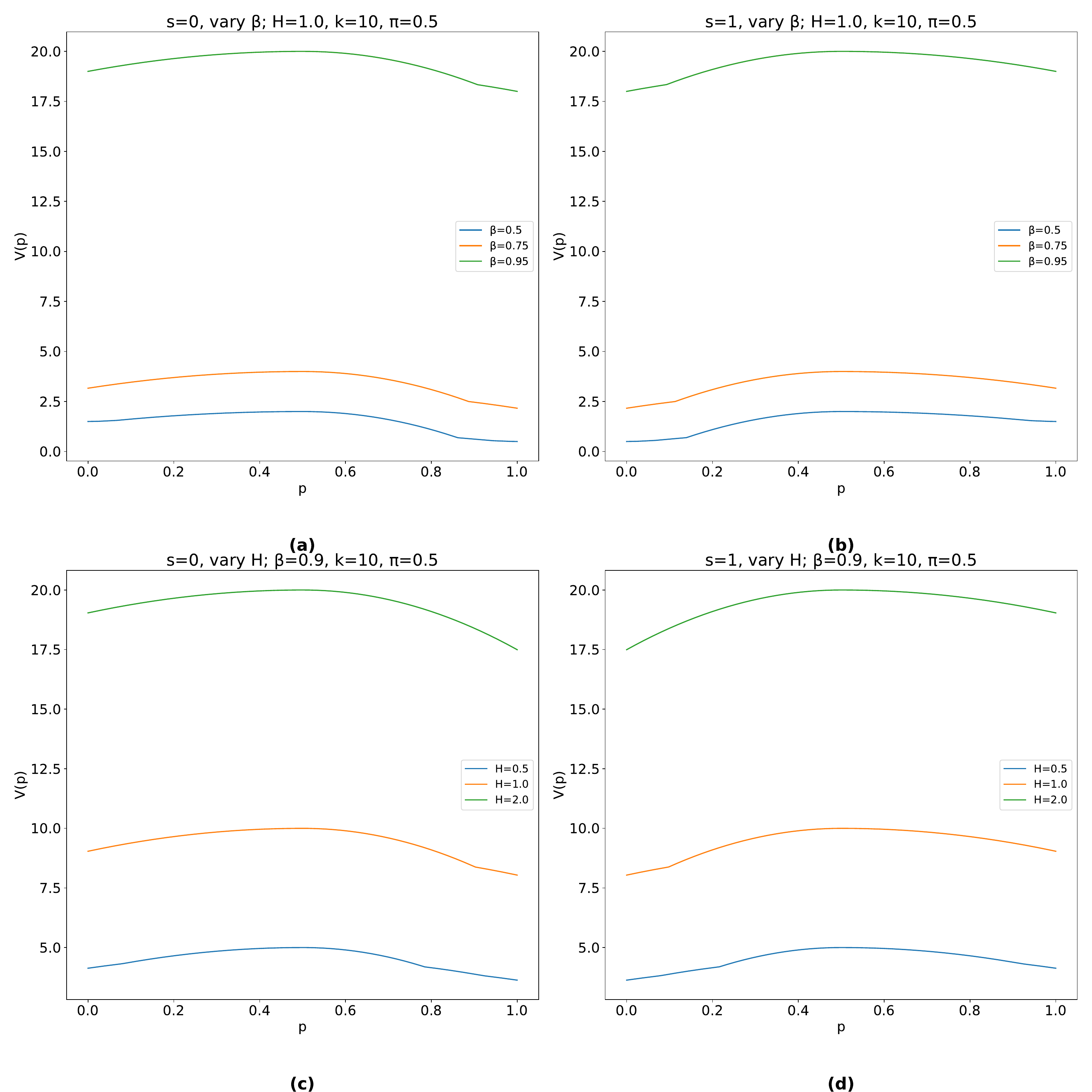}
    \caption{Single Elite - Value Function - Variation in $\beta$ and $H$}
    \label{fig:singleEliteBetaH_valueFunction}
\end{figure}

\begin{figure}[H]
    \centering
    \includegraphics[width=\linewidth]{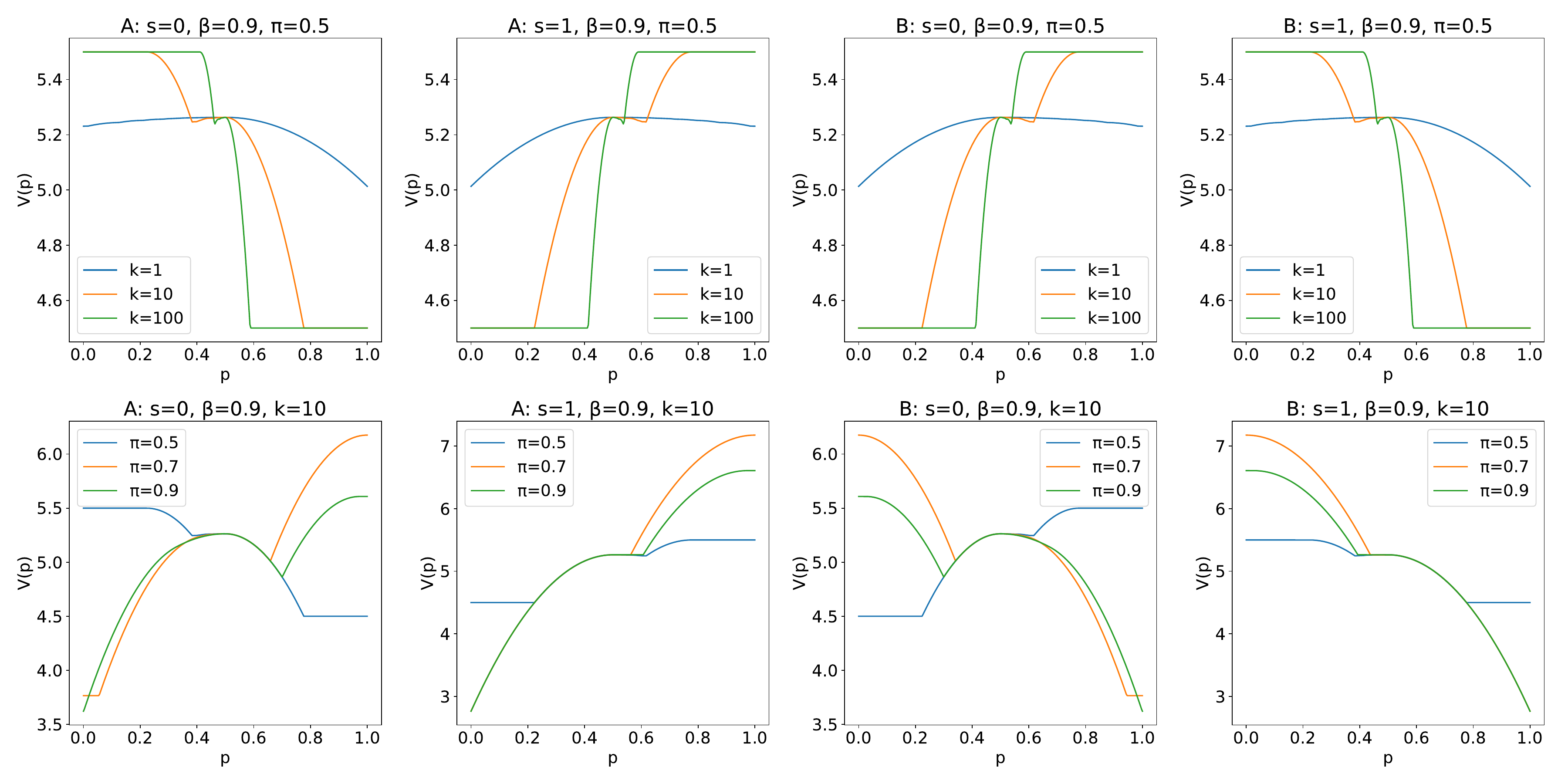}
    \caption{Two Competing Elites - Value Function}
    \label{fig:2Elites_valueFunction}
\end{figure}

\begin{figure}[H]
    \centering
    \includegraphics[width=\linewidth]{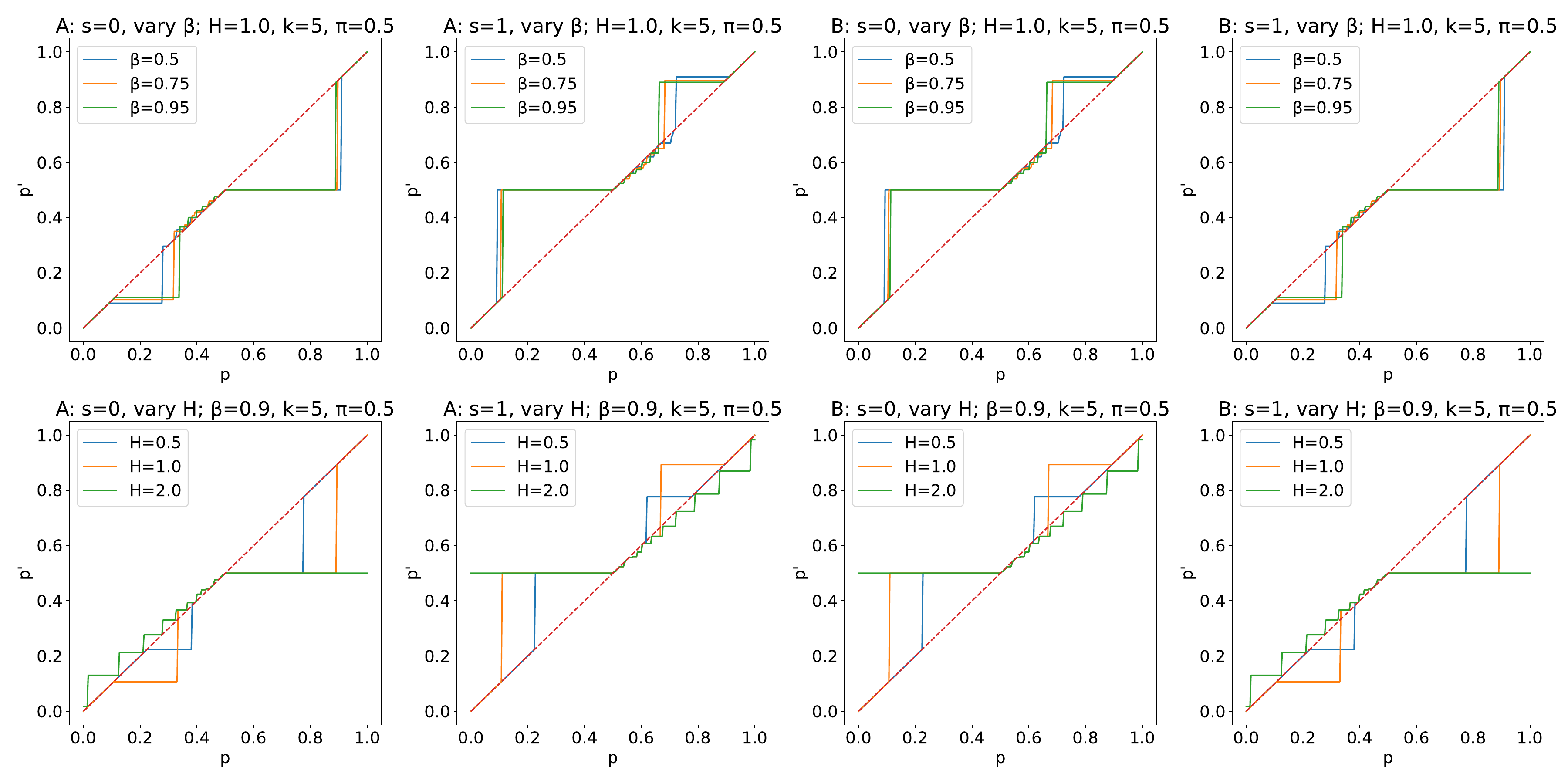}
    \caption{Two Competing Elites - Policy Function - Variation in $\beta$ and $H$}
    \label{fig:twoElitesBetaH_policyFunction}
\end{figure}

\begin{figure}[H]
    \centering
    \includegraphics[width=\linewidth]{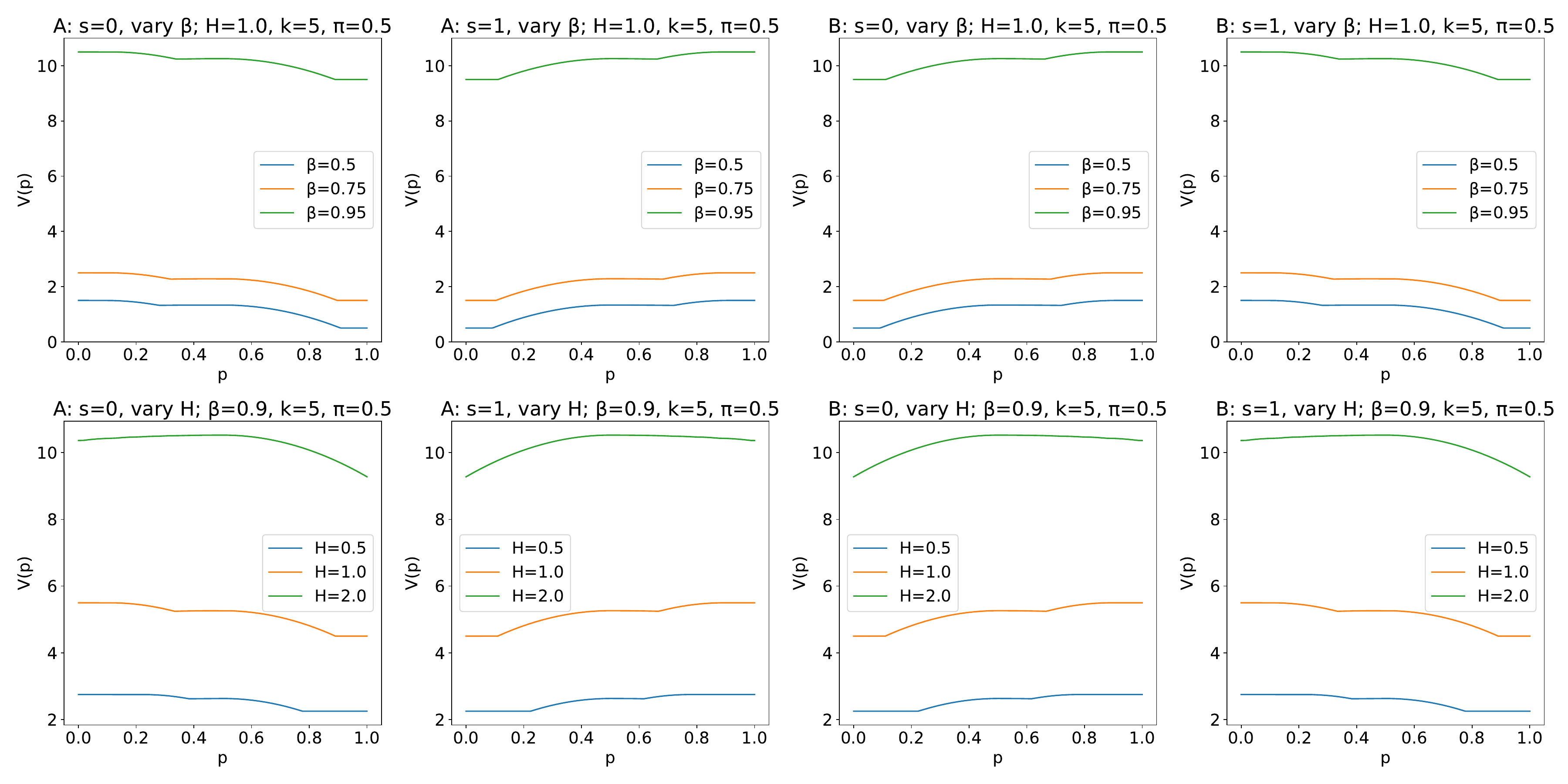}
    \caption{Two Competing Elites - Value Function - Variation in $\beta$ and $H$}
    \label{fig:twoElitesBetaH_valueFunction}
\end{figure}

\section{Proofs}
\subsection{Proof of Claim \ref{claim:oneElite_period2}}\label{app:proofClaim1}
\begin{proof}
Fix \(s=1\). If \(p\ge \tfrac12\) the elite already implements its preferred policy, so \(p'=p\) is optimal. If \(p<\tfrac12\), either move to \(\tfrac12\) to obtain \(H-c(\tfrac12-p)\) or do nothing and obtain \(0\); by the definition of \(p_0^*\), moving is optimal iff \(p>p_0^*\). Overshooting beyond \(\tfrac12\) is strictly dominated because the benefit is already \(H\) at \(\tfrac12\) while the cost strictly increases with distance. The case \(s=0\) is symmetric with \(p_1^*\). 
\end{proof}
\subsection{Proof of Claim \ref{claim:single_elite_two_period}}\label{app:proofClaim2}
\begin{proof}
Let
\[
J(p';p,s)\;:=\; H\,\mathbbm{1}\{y(p')=s\} - c(p'-p) + \beta\,\left[\pi V_{2,1}(p')+(1-\pi) V_{2,0}(p')\right],
\]
the period-1 objective for inherited support \(p\) and state \(s\).

\paragraph{Step 1 (Regional Maximiers).} 
We first consider the maximum $p'$ at each of the for region $A,B,C,D$. On \(A\) and \(D\), both the current payoff and the continuation term in \eqref{eq:EV2_piecewise} are constant over the region. Therefore \(J(\cdot;p,s)\) differs only by the cost \(c(p'-p)\), which is strictly increasing in \(|p'-p|\). Hence the regional maximizer on \(A\) (resp. \(D\)) is the  \(p'=p\) if \(p\in A\) (resp. \(p\in D\)), and otherwise the boundary \(p_0^*\) (resp. \(p_1^*\)).  

On \(B\), using the expected value in \eqref{eq:EV2_piecewise} and the fact that \(y(p')\) is constant on \(B\), maximizing \(J\) is equivalent to minimizing the strictly convex function, yielding the unique maximizer \(p_{B,\max}(p)\). The analysis on \(C\) is identical, giving \(p_{C,\max}(p)\).

\paragraph{Step 2 (Cutoffs \(p_0^*,p_1^*\) are not optimal).}
Suppose \(p>p_0^*\) and consider \(p'=p_0^*\). Moving a small amount \(\varepsilon>0\) to \(\tilde p:=p_0^*+\varepsilon\in B\) leaves the current payoff unchanged (\(y=0\) on both points), strictly reduces the cost (since \(|\tilde p-p|<|p_0^*-p|\)), and strictly increases the continuation value. Hence for every $p_0$, \(p_0^*\) is strictly dominated. We've seen that for each $p\leq p_0^*$ $p_0^*$ is dominated by $p$, and therefore $p_0^*$ cannot be a maximizer, unless the initial belief distribution $p_0$ is exactly $p_0^*$. The argument is symmetric for \(p_1^*\).

Taken together, the maximizer must be one of the regional maximizes \(\{p,\;p_{B,\max}(p),\;p_{C,\max}(p)\}\) or the boundary point \(p'=\tfrac12\), which can be strictly greater than nearby interior points because crossing the median changes the period-1 payoff by \(\pm H\) and, by tie-breaking, yields the full \(H\) at \(p'=\tfrac12\)). This proves part (i).

\paragraph{Step 3 (Polarization pull).}
Consider \(p\le\tfrac12\); the case \(p\ge\tfrac12\) is symmetric.  
If \(s=0\), any \(p'>\tfrac12\) strictly dominated by $p'=\frac{1}{2}$, as it lowers the current payoff (from \(H\) to \(0\)), increases today’s cost, and weakly lowers the continuation value. hence \(p'>\tfrac12\) cannot be optimal. Thus \(p'\le\tfrac12\).  
Similarly, if \(s=1\), any \(p' > \tfrac12\) is dominated by $p'=\frac{1}{2}$, as it equalizes the current payoff at \(H\), it has lower today’s cost and maximizes the continuation value within \(C\). Consequently, whenever crossing occurs, the chosen point is \(p'=\tfrac12\).  
Finally, on \(B\) the maximizer \(p_{B,\max}(p)\) lies in the interval \([p,\tfrac12]\): since \(c(q-p)\) and \(c(\tfrac12-q)\) are minimized at \(p\) and \(\tfrac12\) respectively, then their their strictly convex weighted sum, with positive weights, has a unique minimum between those two points. Therefore in all cases \(p'\in[p,\tfrac12]\), proving \(|p'-\tfrac12|\le |p-\tfrac12|\). This establishes (ii).
\end{proof}

\subsection*{Proof of Proposition \ref{claim:SingleInfiniteHorizon}}\label{app:proofProp1}

Before proving the proposition, we first introduce some definitions and notation.

\begin{definition}[Peak property]
A function \(f:[a,b]\to\mathbb{R}\) has the peak property at \(\alpha\in[a,b]\) if it is weakly increasing on \([a,\alpha]\) and weakly decreasing on \([\alpha,b]\).
\end{definition}

\paragraph{Function class and Bellman operator.}
Let $\mathcal{B}$ be the space of bounded functions $v:\{0,1\}\times[0,1]\to\mathbb{R}$ with sup norm
$\|v\|_\infty=\sup_{s,p}|v_s(p)|$.
Let
\[
\mathcal{F}:=\Big\{v\in\mathcal{B}:\ \forall s,\ v_s(\cdot)\text{ has the peak property at } \tfrac12\Big\}.
\]
For $v\in\mathcal{B}$ define the Bellman operator $T:\mathcal{B}\to\mathcal{B}$ by
\[
(Tv)_s(p):=\max_{p'\in[0,1]}\Big\{R(s,p')-c(p'-p)+\beta\,[\pi\,v_1(p')+(1-\pi)\,v_0(p')]\Big\}.
\]
Also define, for later use the returns function. This function assures assures that the value at $\frac{1}{2}$ is $H$, and correspond to $u(y(p'),s)$
\[
R(s,p') = \begin{cases}
    H\mathbbm{1}\{y(p') = s\} & \text{ if } p' \neq \frac{1}{2} \\
    H & \text{ if } p' = \tfrac{1}{2}
\end{cases}
\]
and the value of policy $v$
\[
\Psi_v(p';s,x):=R(s,p')-c(p'-x)+\beta\big[\pi\,v_1(p')+(1-\pi)\,v_0(p')\big],
\quad\text{so that }(Tv)_s(x)=\max_{p'}\Psi_v(p';s,x).
\]

To prove the first part of Proposition \ref{claim:SingleInfiniteHorizon}, we demonstrate that the value function \( V_s(p) \) satisfies the peak property at \( p = \frac{1}{2} \) over the interval \( [0,1] \). Our proof follows the standard argument presented in \cite{stokey1989recursive} to establish properties of the value function.

\paragraph{Step 1: Contraction (Blackwell).}
Using theorem 3.3 in \cite{stokey1989recursive} we show that T is a contraction. First, $T$ is monotone: $v\le w\Rightarrow Tv\le Tw$.  
$T$ is discounting: for any constant $a$, $T(v+a)=Tv+\beta a$.  
Hence $T$ is a $\beta$‑contraction on $(\mathcal{B},\|\cdot\|_\infty)$; a unique bounded fixed point $(V_0,V_1)$ exists.

\paragraph{}Next we prove a simple useful lemma that shows that no elite overshoot over $\frac{1}{2}$ or go backwards, if $v \in \mathcal{F}$. 

\begin{lemma}[Domain reduction: no overshooting]\label{lem:domain-reduction}
Fix $v\in\mathcal{F}$ and $p\le \tfrac12$. In $\max_{p'}\Psi_v(p';s,p)$ it is w.l.o.g. to restrict $p'\in[p,\tfrac12]$. (Symmetrically, if $p\ge\tfrac12$, restrict to $p'\in[\tfrac12,p]$.)
\end{lemma}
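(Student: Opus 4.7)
The plan is to establish the containment $[p,\tfrac{1}{2}]$ for $p\le\tfrac{1}{2}$ by two separate replacement arguments — one ruling out $p' > \tfrac{1}{2}$ (no overshooting) and one ruling out $p' < p$ (no backward movement) — and then to obtain the case $p\ge\tfrac{1}{2}$ by reflection $q\mapsto 1-q$, which preserves the peak property, preserves the cost (by the symmetry clause of Assumption \ref{ass:cost_assumption}), and simply swaps the two states. A useful preliminary observation is that the weighted continuation $p'\mapsto \pi v_1(p')+(1-\pi)v_0(p')$ inherits the peak property at $\tfrac{1}{2}$, since a nonnegative convex combination of functions weakly increasing on $[0,\tfrac{1}{2}]$ and weakly decreasing on $[\tfrac{1}{2},1]$ is again of that form.

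For the no-overshooting step, I would compare $\Psi_v(p';s,p)$ with $\Psi_v(\tfrac{1}{2};s,p)$ for any $p' > \tfrac{1}{2}$. Since $p\le\tfrac{1}{2}<p'$, strict monotonicity of $c$ in $|\cdot|$ yields $c(p'-p) > c(\tfrac{1}{2}-p)$, a strict cost reduction. The peak property of the continuation delivers a weakly higher value at $\tfrac{1}{2}$ than at $p'$. Finally, the tiebreak convention built into $R$ guarantees $R(s,\tfrac{1}{2}) = H \ge R(s,p')$ in both states. Combining the three channels gives $\Psi_v(\tfrac{1}{2};s,p) > \Psi_v(p';s,p)$, so no $p' > \tfrac{1}{2}$ can be a maximizer.

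For the no-backward-movement step, I would compare $\Psi_v(p;s,p)$ with $\Psi_v(p';s,p)$ for any $p' < p \le \tfrac{1}{2}$. The cost drops strictly from $c(p'-p) > 0$ to $c(0) = 0$; the continuation weakly rises by the peak property on the increasing side of $\tfrac{1}{2}$; and the reward satisfies $R(s,p) \ge R(s,p')$ in every state, using the tiebreak at $\tfrac{1}{2}$ in the boundary case $p=\tfrac{1}{2}$ and the fact that $R(s,\cdot)$ is constant on $[0,\tfrac{1}{2})$ in each state. Therefore $p$ strictly dominates $p'$, and the maximizer must lie in $[p,\tfrac{1}{2}]$.

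The only delicate bookkeeping step is the discontinuous reward $R(s,\cdot)$ at the threshold: without the tiebreak awarding the full payoff $H$ at exactly $p'=\tfrac{1}{2}$, the reward-channel inequality in the no-overshooting comparison for state $s=0$ could degenerate, although the strict cost gap would still carry the argument. Once these three channels — cost, continuation, and reward — are verified to align in the right direction, both containments follow immediately, and the statement for $p\ge\tfrac{1}{2}$ comes for free via the reflection above.
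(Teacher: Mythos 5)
Your proposal is correct and follows essentially the same argument as the paper: for $p\le\tfrac12$, any $p'>\tfrac12$ is dominated by $\tfrac12$ and any $p'<p$ by $p$, via the same three channels (reward weakly up thanks to the tiebreak at $\tfrac12$, cost strictly down, continuation weakly up by the peak property), with the $p\ge\tfrac12$ case handled symmetrically. Your treatment is just a more carefully spelled-out version of the paper's two-sentence proof.
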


\begin{proof}
For $p\le\tfrac12$ and any $p'>\tfrac12$, replacing $p'$ by $\tilde p=\tfrac12$ weakly increases $R$, weakly reduces cost (shorter distance), and weakly increases the continuation term since each $v_s(\cdot)$ peaks at $\tfrac12$. Any $p'<p$ is weakly dominated by $p$ by the same logic.
\end{proof}

\paragraph{Step 2: $T$ preserves the peak property.}
Let $v\in\mathcal{F}$ and fix $s$. We show $(Tv)_s(\cdot)$ is weakly increasing on $[0,\tfrac12]$; the other side is symmetric.

Fix $0\le p<\bar p\le\tfrac12$. By Lemma~\ref{lem:domain-reduction}, restrict to $p'\in[p,\tfrac12]$. Let $p^*\in\arg\max_{p'\in[p,\tfrac12]}\Psi_v(p';s,p)$, and set
\[
\bar p''\ :=\ \min\!\big\{\ \tfrac12,\ \bar p+(p^*-p)\ \big\}.
\]
Note $\bar p\le \bar p''\le \tfrac12$ and $\bar p''\ge p^*$. Then
\[
\begin{aligned}
(Tv)_s(p)
&= \Psi_v(p^*;s,p) \\
&\le R(s,p^*) - c(p^*-p) + \beta\big[\pi\,v_1(\bar p'')+(1-\pi)\,v_0(\bar p'')\big] 
&&\\
&\le R(s,\bar p'') - c(\bar p''-\bar p) + \beta\big[\pi\,v_1(\bar p'')+(1-\pi)\,v_0(\bar p'')\big]
&&\\
&\le \max_{q\in[\bar p,\tfrac12]}\Psi_v(q;s,\bar p)
\ =\ (Tv)_s(\bar p),
\end{aligned}
\]
where the first inequality is due to  $\frac{1}{2} \ge \bar p''\ge p^*$ and $v_s$ nondecreasing on $[0,\tfrac12]$), the second inequality follows since $R$ nondecreasing on $[0,\tfrac12]$, and  $\bar p''-\bar p\le p^*-p$. Thus $(Tv)_s$ is weakly increasing on $[0,\tfrac12]$.

\paragraph{Step 3: $\mathcal{F}$ is complete (closed under uniform limits).}
$(\mathcal{B},\|\cdot\|_\infty)$ is a Banach space. Let  $f_n\in\mathcal{F}$ be a cauchy sequence and therefore we have $f_n\to f \in \mathcal{B}$ uniformly, then for $p<\bar p\le\tfrac12$, $f_n(s,p)\le f_n(s,\bar p)$ for all $n$ implies $f(s,p)\le f(s,\bar p)$. The argument on $[\tfrac12,1]$ is symmetric. Hence $\mathcal{F}$ is closed in $\mathcal{B}$ and therefore complete.

\medskip
\medskip
\noindent\emph{Conclusion after Steps 2–3.}
Thus $T(\mathcal{F})\subseteq\mathcal{F}$. Since $\mathcal{F}$ is closed under uniform limits (Step 3) and $T$ is a contraction (Step 1), the unique fixed point $V$ also lies in $\mathcal{F}$, proving Proposition~\ref{claim:SingleInfiniteHorizon}(1).

\paragraph{Step 4: Polarization pull.}
By Lemma~\ref{lem:domain-reduction}, any maximizer for $(TV)_s(p)$ lies in $[p,\tfrac12]$ if $p\le\tfrac12$ and in $[\tfrac12,p]$ if $p\ge\tfrac12$. Hence
\[
\min\{p,\tfrac12\}\ \le\ \sigma(s,p)\ \le\ \max\{p,\tfrac12\},
\quad\text{so}\quad |\sigma(s,p)-\tfrac12|\ \le\ |p-\tfrac12|.
\]
At $p=\tfrac12$ any move raises cost and (weakly) lowers value, so $\sigma(s,\tfrac12)=\tfrac12$.

\paragraph{Step 5: Monotonicity of the optimal policy in $p$.}
Let $\Psi:=\Psi_V$ and fix $s$. Suppose $0\le p<\bar p\le \tfrac12$.
Let $p^*\in\arg\max_{q\in[p,\tfrac12]}\Psi(q;s,p)$ and
$\bar p^*\in\arg\max_{q\in[\bar p,\tfrac12]}\Psi(q;s,\bar p)$.
Define
\[
\Delta(q):=\Psi(q;s,p)-\Psi(q;s,\bar p)=-c(q-p)+c(q-\bar p).
\]

We claim that $\Delta(\cdot)$ is strictly decreasing on $[0,1]$. Using strict convexity of $c$ on $[0,\infty)$, the increment
$x\mapsto c(x+a)-c(x)$ is strictly increasing in $x$. Let $q_2>q_1$ and define set $a=q_2-q_1>0$, $x_1 = q_1 - \bar p$ and $x_2 = q_1 - p $. We have $x_2 > x_1$ and so
\[
\big[c(q_2-p)-c(q_1-p)\big]\ >\ \big[c(q_2-\bar p)-c(q_1-\bar p)\big].
\]
Rearranging,
\[
\Delta(q_2)-\Delta(q_1)\ =\ -\big[c(q_2-p)-c(q_1-p)\big]+\big[c(q_2-\bar p)-c(q_1-\bar p)\big]\ <\ 0.
\]
Therefore $\Delta$ is strictly decreasing on $[0,1]$.

finally, assume towards contradiction that $\bar p^*<p^*$. Then, from optimality we have:
\[
\Psi(p^*;s,p)\ge \Psi(\bar p^*;s,p),\qquad
\Psi(\bar p^*;s,\bar p)\ge \Psi(p^*;s,\bar p),
\]
adding the two inequalities implies  $\Delta(p^*)\ge \Delta(\bar p^*)$, contradicting the strict decrease together with $\bar p^*<p^*$. Therefore $p^*\le \bar p^*$; the case $\tfrac12\le \bar p<p\le 1$ is symmetric. This proves the monotonicity statements in Proposition~\ref{claim:SingleInfiniteHorizon}(2).
\qed

\subsection{Proof of claim \ref{lem:shrink-move}}\label{app:proofClaim3}

\begin{proof}
We do the case $p\le\tfrac12$; the other case is symmetric. By domain reduction (Lemma~\ref{lem:domain-reduction}), $p^*_c,p^*_{\tilde c}\in[p,\tfrac12]$.
Suppose, to the contrary, that $p^*_{\tilde c}>p^*_c$.
By optimality,
\[
\Psi_c(p^*_c;s,p)\ \ge\ \Psi_c(p^*_{\tilde c};s,p),\qquad
\Psi_{\tilde c}(p^*_{\tilde c};s,p)\ \ge\ \Psi_{\tilde c}(p^*_c;s,p).
\]
Adding and rearranging gives
\[
\big[\tilde c(p^*_{\tilde c}-p)-\tilde c(p^*_c-p)\big]\ \le\ \big[c(p^*_{\tilde c}-p)-c(p^*_c-p)\big].
\]
Let $x_1:=p^*_c-p$ and $x_2:=p^*_{\tilde c}-p$; then $0\le x_1<x_2$. Since $\tilde c$ cost-dominates $c$,
\[
\tilde c(x_2)-\tilde c(x_1)\ >\ c(x_2)-c(x_1),
\]
a contradiction. Hence $p^*_{\tilde c}\le p^*_c$.
\end{proof}

\subsection{Proof of claim \ref{claim:2p_stackleberg}}\label{app:proofClaim4}
\begin{proof}

We solve by backward induction and start with period 2. Take the public opinion at the beginning of period 2 to be \(p_1\), and let \(1-s_2 \in {0,1}\) be Elite \(B\)’s preferred side in period 2. Recall that the implemented policy in period 2 is
\[
y(p_1) = \mathbbm{1}\{p_1 \ge \tfrac12\}.
\]
Elite \(B\) gets benefit \(H\) if the implemented policy matches her preferred side, i.e. if \(y(p_1)=1-s_2
\); otherwise they get 0. Moving public opinion from \(p_1\) to some other point costs \(c(|p - p_1|)\).

There are two situations. First, the current policy already matches \(B\)’s preference. This is the case when $y(p_1) = 1- s_2$. Then Elite \(B\) already gets (H) with zero cost. Any move away from \(p_1\) strictly reduces her payoff, because it adds a positive cost and does not increase the benefit. Hence the optimal action is to do nothing and set \(p = p_1\).

Second, if the current policy is the opposite of \(B\)’s preference. This is the case when $y(p_1) = s_2$. Then Elite \(B\) currently gets 0. To get benefit \(H\), she must change the implemented policy. The cheapest way to change the implemented policy is to move public opinion exactly to \(\tfrac12\): any move beyond \(\tfrac12\) yields the same policy but costs more. So the only candidate move is \(p = \tfrac12\). If the cost of that move is small enough, i.e. $c\big(\tfrac12 - p_1\big) \le H$\footnote{Notice that we assume that $c$ is symmetric, i.e. $c(\frac{1}{2}-p_1) = c(p_1-\frac{1}{2})$, therefore we can consider move from any direction to $\frac{1}{2}$} then moving to \(\tfrac12\) is optimal: Elite \(B\) pays at most \(H\) and gets benefit \(H\), and no farther move can do better because it gives the same benefit \(H\) but is more expensive. If instead the cost of moving to \(\tfrac12\) is \textit{higher} than \(H\), then even the cheapest policy-changing move is not worth it. Any move farther than \(\tfrac12\) would cost even more and still give at most \(H\). Hence in this case the optimal action is again to do nothing and keep \(p = p_1\). Putting these cases together gives exactly the strategy stated in the proposition.

Next, we move to show that the optimal strategy of elite $A$, given the strategy of elite $B$, is on one of four points $\{p_0,\tfrac{1}{2},\tfrac{1}{2}\pm \Delta\}$. Let $\Phi(p_1)$ be the next period value of elite $A$:

\[
\Phi(p_1) = \E_{s_2}\big[u_A(y(\sigma_{B,s_2}(p_1)),s_2)\big]=
\begin{cases}
\pi H, & p_1\ge \tfrac12+\Delta,\\
(1-\pi)H, & p_1\le \tfrac12-\Delta,\\
0, & p_1=\tfrac12 \text{ or } |p_1-\tfrac12|<\Delta.
\end{cases}
\]

We consider three cases.

\begin{enumerate}
    \item \textbf{Case 1: \(p_0 \le \tfrac12 - \Delta\)}. Consider any feasible \(p_1\).
    \begin{itemize}
        \item \(p_1 < \tfrac12-\Delta\): then \(\Phi(p_1)=(1-\pi)H\). Hence the value is given by 
       \[
       u_A(0,s_1) - c(p_1 - p_0) + \beta (1-\pi)H.
       \]
       Since \(c(\cdot)\) is strictly increasing then choosing \(p_1 \neq p_0\), gives strictly smaller value choosing \(p_1=p_0\). Thus every \(p_1 < \tfrac12-\Delta\) is dominated by \(p_1=p_0\).
        \item \(p_1 = \tfrac12-\Delta
   \): this point yields
   \[
   u_A(0,s_1) - c\big(p_0 - (\tfrac12-\Delta)\big) + \beta (1-\pi)H,
   \]
   which is exactly the “semi-lock left” value in the proposition.
    \item \(p_1 \in (\tfrac12-\Delta, \tfrac12+\Delta)\) and  \(p_1 \neq \tfrac12\): for such \(p_1\), we have \(\Phi(p_1)=0\). Then
   \[
   u_A(y(p_1),s_1) - c(p_1 - p_0) + \beta\cdot 0
   \]
   but in all these cases Elite \(A\) forgoes the future value that it gets at \(p_0\). Hence every such point is \(weakly\) dominated either by staying at \(p_0\) or by moving to \(\tfrac12\), as seen next.
    \item \(p_1 = \tfrac12\): this point yields
   \[
   H - c\big(p_0-\tfrac12\big) + \beta\cdot 0,
   \]
    \item Any \(p_1 \ge \tfrac12+\Delta\): then \(\Phi(p_1)=\pi H\), and the value is
   \[
   u_A(1,s_1) - c(p_1 - p_0) + \beta \pi H.
   \]
   Among all such \(p_1\), the closest to \(p_0\) is \(p_1 = \tfrac12+\Delta\). Every \(p_1 > \tfrac12+\Delta\) gives exactly the same current policy and the same future value but higher cost, hence is dominated by \(\tfrac12+\Delta\). So the only non-dominated right-point in this case is
   \[
   p_1 = \tfrac12+\Delta \quad \Rightarrow \quad u_A(1,s_1) - c\big((\tfrac12+\Delta)-p_0\big) + \beta \pi H.
   \]
    \end{itemize}
    \item  \textbf{Case 2: \(p_0 \in (\tfrac12-\Delta, \tfrac12+\Delta)\)}
    \begin{itemize}
        \item \(p_1=p_0\):This gives
   \[
   u_A\big(y(p_0), s_1\big) + \beta \cdot 0,
   \]
   which is the “Inaction” value.
   \item Move to another point in \((\tfrac12-\Delta, \tfrac12+\Delta)\), \(p_1 \neq \tfrac12\):
   For any such point \(\Phi(p_1)=0\). If \(p_1\) is on the same side of \(\tfrac12\) as \(p_0\), then \(y(p_1)=y(p_0)\), so the only difference is the cost, which is higher at \(p_1\neq p_0\). Thus dominated by \(p_1=p_0\). If \(p_1\) is on the other side of \(\tfrac12\), then moving exactly to \(\tfrac12\) is closer to \(p_0\) than moving to any interior point on the other side, and also delivers the “Median” current payoff. Thus such interior crossing points are dominated by \(p_1=\tfrac12\). Hence all interior-band \(p_1 \neq \tfrac12\) are dominated.
    \item Move to the median \(p_1=\tfrac12\):
   This yields
   \[
   H - c\big(p_0 - \tfrac12\big) + \beta\cdot 0.
   \]
    \item  Move to the right boundary \(p_1 = \tfrac12+\Delta\): This yields the right-side protected value \(\beta \pi H\). It gives
       \[
       u_A(1,s_1) - c\big((\tfrac12+\Delta)-p_0\big) + \beta \pi H.
       \]
   \item Move to any \(p_1 > \tfrac12+\Delta\). This gives the same \(\beta \pi H\), but is more expensive; hence such points are dominated by \(\tfrac12+\Delta\).
    \item Move to the left boundary \(p_1 = \tfrac12-\Delta\) gives 
   \[
   u_A(0,s_1) - c\big(p_0-(\tfrac12-\Delta)\big) + \beta (1-\pi)H,
   \]
   \item \(p_1 < \tfrac12-\Delta\) is more expensive than moving to $p_1 = \tfrac12-\Delta$ and hence dominated.
    \end{itemize}
    \item \textbf{Case 3: \(p_0 \ge \tfrac12 + \Delta\)} This case is symmetric to Case 1.
\end{enumerate}

Since for every possible initial public opinion \(p_0\) the set of non-dominated period-1 choices for Elite (A) is contained in $
{p_0, \tfrac12, \tfrac12+\Delta, \tfrac12-\Delta}$
and since the elite A's objective evaluated at these four points is exactly the four scalars listed in the statement, it follows that in equilibrium Elite \(A\) choose among these four values and picks the largest one. This concludes the proof. 
\end{proof}

\end{document}